\theoremstyle{plain}
\newtheorem{theorem}{Theorem}
\newtheorem{lemma}{Lemma}[theorem]
\theoremstyle{definition}
\newtheorem{definition}{Definition}
\theoremstyle{remark}
\newtheorem{example}{Example}[theorem]
\newcommand{\alt}{~\mid~}
\newcommand{\bor}{\alt}
\newcommand{\iso}{\leftrightarrow}
\newcommand{\inl}[1]{\mathtt{inj}_l{\;#1}}
\newcommand{\inr}[1]{\mathtt{inj}_r{\;#1}}
\newcommand{\pv}[2]{\langle #1,#2 \rangle}
\newcommand{\letpv}[3]{{\tt let}\,{\langle #1 \rangle}={#2}~{\tt in}~{#3}}
\newcommand{\letv}[3]{{\tt let}\,{#1}={#2}~{\tt in}~{#3}}
\newcommand{\clause}[2]{\mid ~ #1 \iso #2}
\newcommand{\clauses}[1]{\left\{ ~#1~ \right\}}
\newcommand{\boolt}{\mathbb{B}}
\newcommand{\tc}{\mathtt{t}\!\mathtt{t}}
\newcommand{\fc}{\mathtt{f}\!\mathtt{f}}
\newcommand{\isoterm}{\omega}
\newcommand{\OD}[2]{{\rm OD}_{#1}{#2}}
\newcommand{\ODe}[2]{{\rm OD}^{\it ext}_{#1}{#2}}
\newcommand{\BV}{{\rm Val}}
\newcommand{\FV}{{\rm FV}}
\newcommand{\entailval}{\vdash_v}
\newcommand{\entailiso}{\vdash_{\isoterm}}
\newcommand{\scalprod}[2]{\langle#1|#2\rangle}
\newcommand{\tensor}{\otimes}
\newcommand{\Cx}{\mathbb{C}}
\newcommand{\denot}[1]{{\llbracket{#1}\rrbracket}}
\newcommand{\base}[1]{{\mathcal{B}_{#1}}}
\title{From Symmetric Pattern-Matching to \\ Quantum Control\\ (Extended
Version)}
\author{Amr Sabry
\and Beno\^{i}t Valiron${}^{*}$
\and Juliana Kaizer Vizzotto\thanks{Partially funded by FoQCoss STIC AmSud project - STIC-AmSUD/Capes -
Foundations of Quantum Computation: Syntax and Semantics.
}}
\begin{document}
\maketitle 

\begin{abstract}
  One perspective on quantum algorithms is that they are classical
  algorithms having access to a special kind of memory with exotic
  properties. This perspective suggests that, even in the case of
  quantum algorithms, the control flow notions of sequencing,
  conditionals, loops, and recursion are entirely classical. There is
  however, another notion of control flow, that is itself quantum. The
  notion of quantum conditional expression is reasonably
  well-understood: the execution of the two expressions becomes itself
  a superposition of executions. The quantum counterpart of loops and
  recursion is however not believed to be meaningful in its most
  general form.

  In this paper, we argue that, under the right circumstances, a
  reasonable notion of quantum loops and recursion is possible. To
  this aim, we first propose a classical, typed, reversible language
  with lists and fixpoints. We then extend this language to the
  \emph{closed} quantum domain (without measurements) by allowing
  linear combinations of terms and restricting fixpoints to
  structurally recursive fixpoints whose termination proofs match the
  proofs of convergence of sequences in infinite-dimensional Hilbert
  spaces. We additionally give an operational semantics for the
  quantum language in the spirit of algebraic lambda-calculi and
  illustrate its expressiveness by modeling several common unitary
  operations.
\end{abstract}

\section{Introduction}

The control flow of a program describes how its elementary operations
are organized along the execution. Usual primitive control mechanisms
are sequences, tests, iteration and recursion. Elementary operations
placed in sequence are executed in order. Tests allow conditionally
executing a group of operations and changing the course of the
execution of the program. Finally, iteration gives the possibility to
iterate a process an arbitrary number of times and recursion
generalizes iteration to automatically manage the history of the
operations performed during iteration. The structure of control flow
for conventional (classical) computation is well-understood. In the
case of \emph{quantum} computation, control flow is still subject to
debate. This paper proposes a working notion of quantum control in
closed quantum systems, shedding new light on the problem, and
clarifying several of the previous concerns.

\paragraph{Quantum computation.} 
A good starting point for understanding quantum computation is to
consider classical circuits over \emph{bits} but replacing the bits
with \emph{qubits}, which are intuitively superpositions of bits
weighed by complex number amplitudes. Computationally, a qubit is an
abstract data type governed by the laws of quantum physics, whose
values are normalized vectors of complex numbers in the Hilbert space
$\Cx^2$ (modulo a global phase). By choosing an orthonormal basis, say
the classical bits $\tc$ and $\fc$, a qubit can be regarded as a
complex linear combination, $\alpha~\tc + \beta~\fc$, where $\alpha$
and $\beta$ are complex numbers such that
$|\alpha|^2+|\beta|^2=1$. This generalizes naturally to multiple
qubits: the state of a system of $n$ qubits is a vector in the Hilbert
space $(\Cx^2)^{\otimes{}n}$.

The operations one can perform on a quantum memory are of two kinds:
quantum gates and measurements. Quantum gates are unitary operations
that are ``purely quantum'' in the sense that they modify the quantum
memory without giving any feedback to the outside world: the quantum
memory is viewed as a {\em closed system}. A customary graphical
representation for these operations is the {\em quantum circuit}, akin
to conventional boolean circuits: wires represent qubits while boxes
represents operations to perform on them. One of the peculiar aspects
of quantum computation is that the state of a qubit is
non-duplicable~\cite{wootters82single}, a result known as the {\em
  no-cloning theorem}. A corollary is that a quantum circuit is a very
simple kind of circuit: wires neither split nor merge.

Measurement is a fundamentally different kind of operation: it queries
the state of the quantum memory and returns a classical
result. Measuring the state of a quantum bit is a probabilistic and
destructive operation: it produces a classical answer with a
probability that depends on the amplitudes $\alpha, \beta$ in the
state of the qubit while projecting this state onto $\tc$ or $\fc$,
based on the result.

For a more detailed introduction to quantum computation, we refer the
reader to recent textbooks (e.g., ~\cite{NielsenChuang}).

\paragraph{Control flow in quantum computation.}
In the context of quantum programming languages, there is a
well-understood notion of control flow: the so-called {\em classical
  control flow}. A quantum program can be seen as the construction,
manipulation and evaluation of quantum circuits~\cite{quipper,qwire}.
In this setting, circuits are simply considered as special kinds of
data without much computational content, and programs are ruled by
regular classical control.

One can however consider the circuit being manipulated as a program in
its own right: a particular sequence of execution on the quantum
memory is then seen as a closed system. One can then try to derive a
notion of {\em quantum control}~\cite{qml}, with ``quantum tests'' and
``quantum loops''. Quantum tests are a bit tricky to
perform~\cite{qml,qalternation} but they essentially correspond to
well-understood controlled operations. The situation with quantum
loops is more subtle~\cite{qalternation,yingbook}. First, a
hypothetical quantum loop {\em must} terminate. Indeed, a
non-terminating quantum loop would entail an infinite quantum circuit,
and this concept has so far no meaning. Second, the interaction of
quantum loops with measurement is problematic: it is known that the
canonical model of \emph{open} quantum computation based on
superoperators~\cite{selinger04quantum,qarrow} is incompatible with
such quantum control~\cite{qalternation}. Finally, the mathematical
operator corresponding to a quantum loop would need to act on an
infinite-dimensional Hilbert space and the question of mixing
programming languages with infinitary Hilbert spaces is still an
unresolved issue.

\paragraph{Our contribution.}
In this paper, we offer a novel solution to the question of quantum
control: we define a purely quantum language, inspired by
Theseus~\cite{theseus}, featuring tests and fixpoints together with
lists. More precisely, we propose (1) a typed, reversible language,
extensible to linear combinations of terms, with a reduction strategy
akin to algebraic
lambda-calculi~\cite{linvec,lineal,Vaux09};
(2) a model for the language based on unitary operators over
infinite-dimensional Hilbert spaces, simplifying the Fock space
model of Ying~\cite{yingbook}. This model captures lists, tests,
and structurally recursive fixpoints. We therefore settle two
longstanding issues.  (1) We offer a solution to the problem of
quantum loops, with the use of {\em terminating}, {\em structurally
  recursive}, {\em purely quantum} fixpoints. We dodge previously
noted concerns (e.g.,~\cite{qalternation}) by staying in the closed
quantum setting and answer the problem of the external system of quantum
``coins''~\cite{yingbook} with the use of lists.
(2) By using a linear language based on patterns and
clauses, we give an extensible framework for reconciling algebraic
calculi with quantum computation~\cite{tonder04lambda,lineal,linvec}.

In the remainder of the paper, we first introduce the key idea
underlying our classical reversible language in a simple first-order
setting. We then generalize the setting to allow second-order
functions, recursive types (e.g., lists), and fixpoints. After
illustrating the expressiveness of this classical language, we adapt
it to the quantum domain and give a semantics to the resulting quantum
language in infinite-dimensional Hilbert spaces.

This technical report is an extended version of a paper accepted for
publication in the proceedings of FoSSaCS'18~\cite{shortversion}.

\section{Pattern-Matching Isomorphisms}
\label{sec:intro-iso}
\label{sec:iso-1st-order}
 
The most elementary control structure in a programming language is the
ability to conditionally execute one of several possible code
fragments. Expressing such an abstraction using predicates and nested
\textbf{if}-expressions makes it difficult for both humans and
compilers to reason about the control flow structure. Instead, in
modern functional languages, this control flow paradigm is elegantly
expressed using \emph{pattern-matching}. This approach yields code
that is not only more concise and readable but also enables the
compiler to easily verify two crucial properties: (i) non-overlapping
patterns and (ii) exhaustive coverage of a datatype using a collection
of patterns. Indeed most compilers for functional languages perform
these checks, warning the user when they are violated. At a more
fundamental level, e.g., in type theories and proof assistants, these
properties are actually necessary for correct reasoning about
programs. Our first insight, explained in this section, is that these
properties, perhaps surprisingly, are sufficient to produce a simple
and intuitive first-order reversible programming language.

\begin{figure}[t]
  \centering
  \begin{minipage}{0.4\linewidth}
    \begin{verbatim}
f :: Either Int Int -> a
f (Left 0)     = undefined
f (Left (n+1)) = undefined
f (Right n)    = undefined
\end{verbatim}\vspace{-4ex}
    \caption{A skeleton}\label{fig:intro-ex-1}
  \end{minipage}
  \hfill
  \begin{minipage}{0.4\linewidth}
\begin{verbatim}
g :: (Bool,Int) -> a
g (False,n)  = undefined
g (True,0)   = undefined
g (True,n+1) = undefined
\end{verbatim}\vspace{-4ex}
    \caption{Another skeleton}\label{fig:intro-ex-2}
  \end{minipage}
  \\[3ex]
  \begin{minipage}{0.6\linewidth}
\begin{verbatim}
h :: Either Int Int <-> (Bool,Int)
h (Left 0)     = (True,0)
h (Left (n+1)) = (False,n)
h (Right n)    = (True,n+1)
\end{verbatim}\vspace{-4ex}
    \caption{An isomorphism}\label{fig:intro-ex-3}
  \end{minipage}
\end{figure}

\subsection{An Example}

We start with a small illustrative example, written in a Haskell-like
syntax.  Fig.~\ref{fig:intro-ex-1} gives the skeleton of a function
\verb|f| that accepts a value of type \verb|Either Int| \verb|Int|; the
patterns on the left-hand side exhaustively cover every possible
incoming value and are non-overlapping. Similarly,
Fig.~\ref{fig:intro-ex-2} gives the skeleton for a function~\verb|g|
that accepts a value of type \verb|(Bool,Int)|; again the patterns on
the left-hand side exhaustively cover every possible incoming value
and are non-overlapping. Now we claim that since the types
\verb|Either Int Int| and \verb|(Bool,Int)| are isomorphic, we can
combine the patterns of \verb|f| and \verb|g| into \emph{symmetric
  pattern-matching clauses} to produce a reversible function between
the types \verb|Either Int Int| and
\verb|(Bool,Int)|. Fig.~\ref{fig:intro-ex-3} gives one such function;
there, we suggestively use \verb|<->| to indicate that the function
can be executed in either direction. This reversible function is
obtained by simply combining the non-overlapping exhaustive patterns
on the two sides of a clause. In order to be well-formed in either
direction, these clauses are subject to the constraint that each
variable occurring on one side must occur exactly once on the other
side (and with the same type). Thus it is acceptable to swap the
second and third right-hand sides of \verb|h| but not the first and
second ones.

\subsection{Terms and Types}
\label{sec:1st-order-finite}

We present a formalization of the ideas presented above using a simple
typed first-order reversible language. The language is two-layered.
The first layer contains values, which also play the role of
patterns. These values are constructed from variables ranged over 
$x$ and the introduction forms for the finite types $a,b$ constructed
from the unit type and sums and products of types. The second layer
contains collections of pattern-matching clauses denoting
isomorphisms of type $a \iso b$. Computations are chained applications
of isomorphisms to values:

\begin{alignat*}{10}
&\text{(Value types)}\quad & a, b &~~&&::= ~&&   \mathbb{1} \alt a \oplus b
                                    \alt a \otimes b\\
&\text{(Iso types)} & T &&&::=&& a \iso b \\[1.5ex]
&\text{(Values)} & v &&&::=&& () \alt x \alt \inl{v} \alt \inr{v} \alt
                            \pv{v_1}{v_2}\\
&\text{(Isos)} & \isoterm &&&::=&& 
           \clauses{\clause{v_1}{v'_1}\clause{v_2}{v'_2}~\ldots}
                              \\
&\text{(Terms)} & t &&& ::= && v \alt \isoterm\,t
\end{alignat*}

The typing rules are defined using two judgments:
$\Delta \entailval v : a$ for typing values (or {\em patterns}) and
terms; and $\entailiso \isoterm : a \iso b$ for typing collections of
pattern-matching clauses denoting an isomorphism. As it is customary,
we write $a_1\tensor a_2\tensor\cdots\tensor a_n$ for
$((a_1\tensor a_2)\tensor\cdots\tensor a_n)$, and similarly
$\pv{x_1}{x_2,\ldots,x_n}$ for $\pv{\pv{x_1}{x_2}}{\ldots,x_n}$.

The typing rules for values are the expected ones. The only subtlety
is the fact that they are linear: because values act as patterns, we
forbid the repetition of variables. A typing context $\Delta$ is a set
of typed variables $x_1:a_1,\ldots, x_n:a_n$. A value typing judgment
is valid if it can be derived from the following rules:

\[\begin{array}{c}
\infer{
  \entailval() : \mathbb{1},
}{}
\qquad
\infer{
  x:a\entailval x:a,
}{}
\qquad
\infer{
  \Delta_1,\Delta_2\entailval\pv{v_1}{v_2} : a\otimes b.
}{
  \Delta_1\entailval v_1 : a
  &
  \Delta_2\entailval v_2 : b
}
\\ \\
\infer{
  \Delta\entailval\inl{v} : a\oplus b,
}{
  \Delta\entailval v : a
}
\qquad
\infer{
  \Delta\entailval\inr{v} : a\oplus b,
}{
  \Delta\entailval v : b
}
\end{array}\]

\noindent The typing rule for term construction is simple and forces
the term to be closed:
\[
\infer{
  \entailval \isoterm~t : b
}{
  \entailval t : a & \entailiso \isoterm : a \iso b
}
\]

\noindent The most interesting type rule is the one for
isomorphisms. We present the rule and then explain it in detail:

\begin{equation}\label{eq:typ-iso-specialized}
\infer{ 
  \entailiso 
  \clauses{\clause{v_1}{v'_1}\clause{v_2}{v'_2}~\ldots} : a \iso b,
}{
  \begin{array}{@{}l@{}}
    \Delta_1\entailval v_1 : a 
    \\
    \Delta_1\entailval v'_1 : b
  \end{array}
  &
  \ldots 
  &
  \begin{array}{@{}l@{}}
    \Delta_n\entailval v_n : a 
    \\
    \Delta_n\entailval v'_n : b
  \end{array}
  &
  \begin{array}{@{}l@{}}
    \forall i\neq j, v_i\bot v_j
    \\
    \forall i\neq j, v'_i\bot v'_j
  \end{array}
  &
  \begin{array}{@{}l@{}}
    \dim(a) = n
    \\
    \dim(b) = n
  \end{array}
}
\end{equation}

\noindent The rule relies on two auxiliary conditions as motivated in
the beginning of the section. These conditions are (i) the
orthogonality judgment $v \bot v'$ that formalizes that patterns must
be \emph{non-overlapping} and (ii) the condition $\dim(a)=n$ which
formalizes that patterns are \emph{exhaustive}. The rules for deriving
orthogonality of values or patterns are:
\[
  \begin{array}{c}
    \infer{\inl{v_1}~\bot~\inr{v_2}}{}
    \qquad
    \infer{\inr{v_1}~\bot~\inl{v_2}}{}
    \\[2ex]
    \infer{\inl{v_1}~\bot~\inl{v_2}}{v_1~\bot~v_2}
    \quad
    \infer{\inr{v_1}~\bot~\inr{v_2}}{v_1~\bot~v_2}
    \quad
    \infer{\pv{v}{v_1}~\bot~\pv{v'}{v_2}}{v_1~\bot~v_2}
    \quad
    \infer{\pv{v_1}{v}~\bot~\pv{v_2}{v'}}{v_1~\bot~v_2}
  \end{array}
\]
\noindent The idea is simply that the left and right injections are
disjoint subspaces of values. To characterize that a set of patterns
is exhaustive, we associate a \emph{dimension} with each type. For
finite types, this is just the number of elements in the type and is
inductively defined as follows: $\dim(\mathbb{1})=1$;
$\dim(a\oplus b) = \dim(a)+\dim(b)$; and
$\dim(a\otimes b) = \dim(a)\cdot\dim(b)$. For a given type $a$, if a
set of non-overlapping clauses has cardinality $\dim(a)$, it is
exhaustive. Conversely, any set of exhaustive clauses for a type $a$
either has cardinality $\dim(a)$ or can be extended to an equivalent
exhaustive set of clauses of cardinality $\dim(a)$.

\subsection{Semantics} 
\label{sec:iso-iso}

We equip our language with a simple operational semantics on terms,
using the natural notion of matching.  To formally define it, we first
introduce the notion of variable assignation, or valuation, which is a
partial map from a finite set of variables (the support) to a set of
values.  We denote the matching of a value $w$ against a pattern $v$
and its associated valuation $\sigma$ as $\sigma[v] = w$ and define it
as follows:
\[
\infer{\sigma[()] = ()}{}
\quad
\infer{\sigma[x] = v}{\sigma = \{ x \mapsto v\}}
\quad
\infer{\sigma[\inl{v}] = \inl{w}}{\sigma[v] = w}
\quad
\infer{\sigma[\inr{v}] = \inr{w}}{\sigma[v] = w}
\]
\[
\infer{
  \sigma[\pv{v_1}{v_2}] = \pv{w_1}{w_2}
}{
  \sigma_2[v_1] = w_1
  &
  \sigma_1[v_2] = w_2
  &
  \text{supp}(\sigma_1) \cap \text{supp}(\sigma_2) = \emptyset
  &
  \sigma = \sigma_1\cup\sigma_2
}
\]
If $\sigma$ is a valuation whose support contains the variables of
$v$, we write $\sigma(v)$ for the value where the variables of $v$
have been replaced with the corresponding values in $\sigma$:
\begin{itemize}
\item $\sigma(()) = ()$
\item $\sigma(x) = v$ is $\{x\mapsto v\}\subseteq \sigma$
\item $\sigma(\inl{v}) = \inl{\sigma(v)}$
\item $\sigma(\inr{v}) = \inr{\sigma(v)}$
\item $\sigma(\pv{v_1}{v_2}) = \pv{\sigma(v_1)}{\sigma(v_2)}$
\end{itemize}

\noindent Given these definitions, we can define the reduction
relation on terms. The redex
$ \clauses{\clause{v_1}{v'_1}\clause{v_2}{v'_2}~\ldots} v $ reduces to
$\sigma(v'_i)$ whenever $\sigma[v_i] = v'_i$. Because of the
conditions on patterns, a matching pattern exists by exhaustivity of
coverage, and this pattern is unique by the non-overlapping
condition. Congruence holds: $\isoterm\,t\to\isoterm\,t'$ whenever
$t\to t'$.  As usual, we write $s\to t$ to say that~$s$ rewrites in
one step to $t$ and $s\to^*t$ to say that~$s$ rewrites to $t$ in~0 or
more steps.

Because of the conditions set on patterns, the rewrite system is
deterministic. More interestingly, we can swap the two sides of all
pattern-matching clauses in an isomorphism $\isoterm$ to get
$\isoterm^{-1}$. The execution of $\isoterm^{-1}$ is the reverse
execution of $\isoterm$ in the sense that
$\isoterm^{-1}(\isoterm~t) \to^* t $ and
$\isoterm(\isoterm^{-1}~t') \to^* t'$.

\section{Second-Order Functions, Lists, and Recursion}

The first-order reversible language from the previous section embodies
symmet\-ric-pattern matching clauses as its core notion of
control. Its expressiveness is limited, however. We now show that it
is possible to extend it to have more in common with a conventional
functional language. To that end, we extend the language with the
ability to parametrically manipulate isomorphisms, with a recursive
type (lists), and with recursion.

\subsection{Terms and Types}
\label{sec:2st-order}

Formally, the language is now defined as follows.
\begin{alignat*}{100}
&\text{(Val }\& ~\text{term types)} \quad& a, b &&&::=~ &&  
                        \mathbb{1} \alt a \oplus b \alt a \otimes b \alt [a]
                          \\
&\text{(Iso types)} & T &&&::=&&  a \iso b \alt (a \iso b) \to T \\[1.5ex]
&\text{(Values)} & v &&&::=&&
                       () \alt x \alt \inl{v} \alt \inr{v} \alt
                        \pv{v_1}{v_2}\\
&\text{(Products)} & p &&&::=&&
                       () \alt x \alt \pv{p_1}{p_2}\\
&\text{(Extended Values)} & e &&&::=&& v \alt
                                 \letv{p_1}{\isoterm~p_2}{e}\\
&\text{(Isos)} & \isoterm &&&::=&& 
                          \clauses{\clause{v_1}{e_1}\clause{v_2}{e_2}~\ldots}
                          \alt \lambda f.\isoterm \alt \\
&            &          &&&   && \mu f.\isoterm \alt f \alt 
                                          \isoterm_1\,\isoterm_2 \\
&\text{(Terms)}  & t &&&::=&& () \alt x \alt \inl{t} \alt \inr{t} \alt
                        \pv{t_1}{t_2} \alt \\ 
&              &   &&&   && \isoterm~t \alt \letv{p}{t_1}{t_2}
\end{alignat*}

\noindent We use variables $f$ to span a set of iso-variables and
variables $x$ to span a set of term-variables.  We extend the layer of
isos so that it can be parameterized by a fixed number of other isos,
i.e., we now allow higher-order manipulation of isos using
$\lambda f.\isoterm$, iso-variables, and applications.  Isos can now
be used inside the definition of other isos with a let-notation. These
let-constructs are however restricted to products of term-variables:
they essentially serve as syntactic sugar for composition of isos. An
extended value is then a value where some of its free variables are
substituted with the result of the application of one or several isos.
Given an extended value $e$, we define its {\em bottom value}, denoted
with $\BV(e)$ as the value ``at the end'' of the let-chain:
$\BV(v) = v$, and $\BV(\letv{p}{\isoterm p}{e})=\BV(e)$. The
orthogonality of extended values is simply the orthogonality of their
bottom value.

\begin{table}[tb]
\[
\begin{array}{c}
\infer{\emptyset;\Psi\entailval () : \mathbb{1}}{}
\qquad
\infer{x:a;\Psi \entailval x :a}{}
\\ \\
\infer{\Delta;\Psi\entailval \inl{t}:a \oplus b}{\Delta;\Psi\entailval
  t:a}
\qquad
\infer{\Delta;\Psi\entailval\inr{t}:a \oplus b}{\Delta;\Psi\entailval t:b}
\\ \\
\infer{
  \Delta_1,\Delta_2;\Psi\entailval \pv{t_1}{t_2} : a \otimes b
}{
  \Delta_1;\Psi\entailval t_1 : a 
  & 
  \Delta_2;\Psi\entailval t_2 : b
}
\\ \\
\infer{
  \Delta;\Psi\entailval \isoterm~t : b
}{
  \Psi\entailiso \isoterm : a \iso b
  & 
  \Delta;\Psi\entailval t : a
}
\quad
\infer{
  \Delta;\Psi\entailval \letpv{x,y}{t_1}{t_2} : c
}{
  \Delta;\Psi\entailval t_1 : a \otimes b
  & 
  \Delta,x:a,y:b;\Psi\entailval t_2 : c
}
\end{array}
\]
\caption{Typing rules for terms and values}
\label{tab:termtyp}
\end{table}

\begin{table}[t]
\[
\begin{array}{c}
\infer{
  \Psi\entailiso 
  \clauses{\clause{v_1}{e_1}\clause{v_2}{e_2}~\ldots} : a \iso b.
}{
  \begin{array}{l@{\quad}l@{\quad}l@{\qquad}l}
    \Delta_1;\Psi \entailval v_1 : a 
    &
      \ldots
    &
      \Delta_n;\Psi\entailval v_n : a
    & 
      \OD{a}\{v_1,\ldots,v_n\}
    \\
    \Delta_1;\Psi \entailval e_1 : b 
    &
      \ldots
    &
      \Delta_n;\Psi\entailval e_n : b
    & \ODe{b}\{e_1,\ldots,e_n\}
  \end{array}
}
\\ \\
\infer{
  \Psi \entailiso \lambda f.\isoterm :(a\iso b) \to T
}{
  \Psi,f:a\iso b \entailiso \isoterm : T
}
\qquad
\infer{ \Psi,f : T \entailiso f : T}{}
\quad \\ \\
\infer{\Psi \entailiso \isoterm_1 \isoterm_2 : T}
{\Psi \entailiso \isoterm_1 : (a\iso b) \to T \;\; 
\Psi \entailiso \isoterm_2 : a\iso b}
\\ \\
\inferrule{
  \Psi, f:a\iso b\entailiso \isoterm : (a_1\iso b_1)\to\cdots \to(a_n\iso
  b_n)\to(a\iso b)
  \\
  \text{$\mu f.\isoterm$ terminates in any finite context}
}
{
  \Psi\entailiso \mu f.\isoterm : (a_1\iso b_1)\to\cdots \to(a_n\iso b_n)\to(a\iso b)
}
\\
\end{array}
\]
\caption{Typing rules for isos}
\label{tab:isotyp}
\end{table}

As usual, the type of lists $[a]$ of elements of type $a$ is a
recursive type and is equivalent to $\mathbb{1}\oplus(a\times[a])$.
We build the value $[]$ (empty list) as $\inl{()}$ and the term
$t_1:t_2$ (cons of $t_1$ and $t_2$) as $\inr{\pv{t_1}{t_2}}$.  In
addition, to take full advantage of recursive datatypes, it is natural
to consider recursion. Modulo a termination guarantee it is possible
to add a fixpoint to the language: we extend isos with the fixpoint
constructor $\mu f.\isoterm$. Some reversible languages allow infinite
loops and must work with partial isomorphisms instead. Since we plan
on using our language as a foundation for a quantum language we insist
of termination.

Since the language features two kinds of variables, there are typing
contexts (written $\Delta$) consisting of base-level typed variables
of the form $x:a$, and typing context (written $\Psi$) consisting of
typed iso-variables of the form $f:T$. As terms and values contain
both base-level and iso-variables, one needs two typing
contexts. Typing judgments are therefore written respectively as
$\Delta;\Psi\entailval t:a$.  The updated rules for $(\entailval)$ are
found in Tab.~\ref{tab:termtyp}. As the only possible free variables
in isos are iso-variables, their typing judgments only need one
context and are written as $\Psi\entailiso \isoterm:T$.

The rules for typing derivations of isos are in
Tab.~\ref{tab:isotyp}. It is worthwhile mentioning that isos are
treated in a usual, non-linear way: this is the purpose of the typing
context separation. The intuition is that an iso is the description of
a closed computation with respect to inputs: remark that isos cannot
accept value-types. As computations, they can be erased or duplicated
without issues. On the other hand, value-types still need to be
treated linearly.

In the typing rule for recursion, the condition ``$\mu f.\isoterm$
terminates in any finite context'' formally refers to the following
requirement. A well-typed fixpoint $\mu f.\isoterm$ of type
$\Psi\entailiso \mu f.\isoterm : (a_1\iso b_1)\to\cdots \to(a_n\iso
b_n)\to(a\iso b)$ is {\em terminating in a $0$-context} if for all
closed isos $\isoterm_i:a_i\iso b_i$ not using fixpoints and for every
closed value $v$ of type $a$, the term
$((\mu f.\isoterm)\isoterm_1\ldots\isoterm_n)v$ terminates.  We say
that the fixpoint is {\em terminating in an $(n+1)$-context} if for
all closed isos $\isoterm_i:a_i\iso b_i$ terminating in $n$-contexts,
and for every closed value $v$ of type $a$, the term
$((\mu f.\isoterm)\isoterm_1\ldots\isoterm_n)v $ terminates.  Finally,
we say that the fixpoint is {\em terminating in any finitary context}
if for all $n$ it is terminating in any $n$-context.

With the addition of lists, the non-overlapping and exhaustivity
conditions need to be modified. The main problem is that we can no
longer define the dimension of types using natural numbers: $[a]$ is
in essence an infinite sum, and would have an ``infinite''
dimension. Instead, we combine the two conditions into the concept of
\emph{orthogonal decomposition}.  Formally, given a type $a$, we say
that a set $S$ of patterns is an {\em orthogonal decomposition},
written $\OD{a}{(S)}$, when these patterns are pairwise orthogonal and
when they cover the whole type.  We formally define $\OD{a}{(S)}$ as
follows.  For all types $a$, $\OD{a}\{x\}$ is valid.  For the unit
type, $\OD{\mathbb{1}}\{()\}$ is valid.  If $\OD{a}(S)$ and
$\OD{b}(T)$, then
\begin{align*}
  &\OD{a\oplus b}(\{\inl{v}~|~ v\in S\}\cup\{\inr{v}~|~ v\in T\})
  \\
  \text{and}\quad
  &\OD{a\otimes b}\{\pv{v_1}{v_2} ~|~ v_1\in S,~ v_2\in T,~
  \FV(v_1)\cap\FV(v_2)=\emptyset\}, 
\end{align*}
where $\FV(t)$ stands for the set of free value-variables in $t$. We
then extend the notion of orthogonal decomposition to extended values
as follows. If $S$ is a set of extended values, $\ODe{a}(S)$ is true
whenever $ \OD{a}\{\BV(e) ~|~ e \in S\} $. With this new
characterization, the typing rule of iso in
Eq.~\ref{eq:typ-iso-specialized} still holds, and then can be
re-written using this notion of orthogonal decomposition as shown in
Tab.~\ref{tab:isotyp}.

\subsection{Semantics} 
\label{subsec:2st-semantics}

In Tab.~\ref{tab:reduction} we present the reduction rules for the
reversible language. We assume that the reduction relation applies to
well-typed terms. In the rules, the notation $C[-]$ stands for an {\em
  applicative context}, and is defined as follows.
\[
  \begin{array}{lll}
    C[-]&::=
    & [-] \bor
      \inl{C[-]} \bor
      \inr{C[-]} \bor
      (C[-])\isoterm \bor
    \\
        &&\{\cdots\}~C[-] \bor
           \letv{p}{C[-]}{t_2}\bor
           \pv{C[-]}{v} \bor
           \pv{v}{C[-]}.
  \end{array}
\]

\begin{table}[t]
\[
\begin{array}{c}
\infer[\mathrm{Cong}]{C[t_1] \to C[t_2]}{t_1 \to t_2}
\qquad
\infer[\mathrm{LetE}]{\letv{p}{v_1}{t_2} \to \sigma(t_2)}{\sigma[p] = v_1}
\\[1.5ex]
\infer[\mathrm{IsoApp}]{ \clauses{\clause{v_1}{t_1}\;|~\ldots\;
  \clause{v_n}{t_n} } \; v \to \sigma(t_i)}{
  \sigma[v_i] = v}
\\[1.5ex]
\infer[\mathrm{HIsoApp}]{(\lambda f.\isoterm)\;\isoterm_2 \to \isoterm[\isoterm_2/f]}{}
\\[1.5ex]
\infer[\mathrm{IsoRec}]{
  \mu f.\isoterm \to \lambda f_1\ldots
  f_n.(\isoterm[((\mu f.\isoterm)f_1\ldots f_n)/f])f_1\ldots f_n
}{
  \Psi, f:a\iso b\entailiso\isoterm : (a_1\iso b_1)\to\cdots \to(a_n\iso
  b_n)\to(a\iso b)
}
\end{array}
\]
\caption{Reduction rules}
\label{tab:reduction}
\end{table}

The inversion of isos is still possible but more subtle than in the
first-order case. We define an inversion operation $(-)^{-1}$ on iso
types with, $(a\iso b)^{-1} := (b\iso a)$,
$((a\iso b)\to T)^{-1} := ((b\iso a)\to (T^{-1}))$.  Inversion of isos
is defined as follows. For fixpoints,
$(\mu f.\isoterm)^{-1} = \mu f.(\isoterm^-1)$.  For variables,
$(f)^{-1} := f$.  For applications,
$(\isoterm_1~\isoterm_2)^{-1} := (\isoterm_1)^{-1}~(\isoterm_2)^{-1}$.
For abstraction,
$(\lambda f.\isoterm)^{-1} := \lambda f.(\isoterm^{-1})$. Finally,
clauses are inverted as follows:
\[\begin{array}{c}
      \left(
      \begin{array}{l@{~}c@{~}l}
        v_1&{\iso}&{\tt let}\,p_1=\isoterm_1\,p'_1\,{\tt in}
        \\
           && \cdots
        \\
           && {\tt let}\,p_n=\isoterm_n\,p'_n\,{\tt in}~v'_1
      \end{array}
      \right)^{-1}
      :=
      \left(
      \begin{array}{lcl@{}l@{}l}
        v'_1&{\iso}&{\tt let}\,p'_n&=\isoterm_n^{-1}&\,p_n\,{\tt in}
        \\
           && \cdots
        \\
           && {\tt let}\,p'_1&=\isoterm_1^{-1}&\,p_1\,{\tt in}~v_1
      \end{array}
      \right).
    \end{array}   
  \]
Note that $(-)^{-1}$ only inverts first-order arrows
$(\iso)$, not second-order arrows $(\to)$. This is reflected by the
fact that iso-variable are non-linear while value-variables are. This
is due to the clear separation of the two layers of the language.

The rewriting system satisfies the usual properties for well-typed
terms: it is terminating, well-typed closed terms have a unique normal
value-form, and it preserves typing.

\begin{theorem}\label{th:para-iso-inv-type}
The inversion operation is well-typed, in the sense that if
$
f_1:a_1\iso b_1,\ldots,f_n:a_n\iso b_n
\entailiso
\isoterm : T
$
then we also have
$
f_1:b_1\iso a_1,\ldots,f_n:b_n\iso a_n
\entailiso
\isoterm^{-1} : T^{-1}
$.
\qed
\end{theorem}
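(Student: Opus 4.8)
The plan is to prove the statement by structural induction on the derivation of $\Psi \entailiso \isoterm : T$, proceeding case-by-case through the five typing rules in Tab.~\ref{tab:isotyp}. The key observation guiding the whole proof is that inversion acts on iso-types by swapping the two sides of every first-order arrow $(\iso)$ while leaving the second-order arrow structure $(\to)$ intact; this is precisely captured by the two defining equations $(a\iso b)^{-1} = (b\iso a)$ and $((a\iso b)\to T)^{-1} = ((b\iso a)\to T^{-1})$. The inductive hypothesis will always supply that the sub-isos invert correctly, and I must check that recombining them reproduces a valid derivation for the inverted type. Note that the iso-context $\Psi = f_1{:}a_1\iso b_1,\ldots,f_n{:}a_n\iso b_n$ gets transformed into $\Psi^{-1} = f_1{:}b_1\iso a_1,\ldots,f_n{:}b_n\iso a_n$, so each case must track this swap of the context simultaneously with the swap of the conclusion.

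First I would handle the easy structural cases. The variable case $\Psi,f{:}T \entailiso f : T$ is immediate since $(f)^{-1} = f$ and the inverted context contains $f{:}T^{-1}$. For abstraction $\lambda f.\isoterm$, the inverse is $\lambda f.(\isoterm^{-1})$; by induction $\isoterm$ inverts to type $T^{-1}$ in the swapped context, and re-applying the abstraction rule gives $(a\iso b)\to T$ inverting to $(b\iso a)\to T^{-1}$, matching the definition of $((a\iso b)\to T)^{-1}$. The application case $\isoterm_1\,\isoterm_2$ is similar: the inverse is $(\isoterm_1)^{-1}\,(\isoterm_2)^{-1}$, and since $\isoterm_1$ has type $(a\iso b)\to T$ whose inverse is $(b\iso a)\to T^{-1}$, and $\isoterm_2$ has type $a\iso b$ whose inverse is $b\iso a$, the application rule reassembles these into $T^{-1}$ as required. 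The fixpoint case $\mu f.\isoterm$ follows the same pattern, using that the arrow prefix $(a_1\iso b_1)\to\cdots\to(a_n\iso b_n)\to(a\iso b)$ inverts componentwise to $(b_1\iso a_1)\to\cdots\to(b_n\iso a_n)\to(b\iso a)$; the only extra remark needed is that the termination side-condition is symmetric under inversion, which holds because $\isoterm^{-1}$ reverses the reduction of $\isoterm$ by the last property stated before the theorem.

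The main obstacle will be the clause case, i.e.\ the base-level iso rule producing $a\iso b$. Here I must show that if the clauses
$\clauses{\clause{v_i}{e_i}}$ form a valid iso $a\iso b$ with orthogonal decompositions $\OD{a}\{v_1,\ldots,v_n\}$ and $\ODe{b}\{e_1,\ldots,e_n\}$, then the swapped-and-inverted clauses form a valid iso $b\iso a$. The difficulty is threefold. First, each extended value $e_i$ on the right is a chain of \textbf{let}-bindings $\letv{p_1}{\isoterm_1 p'_1}{\cdots}$ terminating in a value, and inversion reverses the order of these bindings while inverting each embedded iso $\isoterm_j$; I must verify that the reversed chain still type-checks, which amounts to checking that the value-variable contexts thread correctly through the reversed sequence and that each $\isoterm_j^{-1}$ has the inverted arrow type supplied by the induction hypothesis. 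Second, I must check the orthogonality/exhaustivity side-conditions for the inverted clauses: the new left-hand patterns are the bottom values $\BV(e_i)$, so $\ODe{b}\{e_1,\ldots,e_n\}$ (which by definition is $\OD{b}\{\BV(e_i)\}$) becomes the required $\OD{b}$ condition for the inverse, and symmetrically the original $\OD{a}\{v_i\}$ becomes the $\ODe{a}$ condition. The cleanest way to discharge the typing of the reversed \textbf{let}-chain is an auxiliary induction on the length $n$ of the chain, peeling off the outermost binding and appealing to the value-typing rules of Tab.~\ref{tab:termtyp} together with the inductive hypothesis on the inner isos. I expect this bookkeeping of contexts through the reversed \textbf{let}-chain to be the genuinely technical part; all the other cases are a routine matching of the inversion equations against the typing rules.
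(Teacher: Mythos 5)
The paper never actually proves Theorem~\ref{th:para-iso-inv-type}: the statement carries its proof-omitted marker and no argument is supplied anywhere in the text, so there is no official proof to compare yours against. Taken on its own merits, your proposal is the natural argument and is essentially sound. The induction on the typing derivation of Tab.~\ref{tab:isotyp} is the right skeleton, the variable/abstraction/application cases are indeed immediate from the two defining equations of $(-)^{-1}$ on types, and your analysis of the clause case is correct on both counts: the reversed let-chain type-checks by an inner induction on its length (linearity of the value contexts is what makes the threading work), and the two side conditions exactly trade places, since $\ODe{b}\{e_1,\ldots,e_n\}$ is by definition $\OD{b}$ of the bottom values $\BV(e_i)$, which are precisely the left-hand patterns of the inverted clauses, while $\OD{a}\{v_1,\ldots,v_n\}$ becomes the required $\ODe{a}$ condition for the inverted right-hand sides.

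The one place where your argument is thinner than it needs to be is the fixpoint case. The rule for $\mu f.\isoterm$ in Tab.~\ref{tab:isotyp} carries a \emph{semantic} side condition (termination in any finite context), so your induction must establish that condition for $\mu f.(\isoterm^{-1})$ before the rule can be re-applied. Your justification---that $\isoterm^{-1}$ reverses the reductions of $\isoterm$---is essentially the content of Theorem~\ref{th:para-iso-iso}, which is stated for \emph{well-typed} isos and whose application to $\isoterm^{-1}$ presupposes the very well-typedness you are trying to prove; the properties quoted just before the theorem (termination, unique normal forms, subject reduction) likewise apply only to well-typed terms, so invoking them for the inverse is circular for the same reason. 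To close the case you need a purely operational lemma, proved with no typing assumption on the inverse beyond syntactic well-formedness: every reduction sequence of $((\mu f.(\isoterm^{-1}))\,\omega_1\ldots\omega_n)\,w$ mirrors, step for step and in reverse, a reduction sequence of $\mu f.\isoterm$ applied to the inverted arguments, so termination of the original in finite contexts forces termination of the inverse. With such a lemma inserted your induction goes through; without it, the fixpoint case is not actually closed, even though the intuition you state is the right one.
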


Thanks to the fact that the language is terminating, we also recover
the operational result of Sec.~\ref{sec:iso-iso}.

\begin{theorem}\label{th:para-iso-iso}
  Consider a well-typed, closed iso $\entailiso \isoterm:a\iso b$,
  and suppose that $\entailval v:a$ and that $\entailval w:b$, then
  $\isoterm^{-1}(\isoterm~v) \to^* v$ and $\isoterm(\isoterm^{-1}~w) \to^* w$.  \qed
\end{theorem}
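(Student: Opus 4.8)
The plan is to prove both equations by a single induction, since they are symmetric: proving $\isoterm^{-1}(\isoterm~v)\to^* v$ for every well-typed closed iso $\isoterm:a\iso b$ and value $\entailval v:a$ automatically gives the companion statement by applying the same result to $\isoterm^{-1}:b\iso a$, using that $(\isoterm^{-1})^{-1}=\isoterm$ (which holds by inspection of the inversion operation on each syntactic form). So I would first record this duality, reducing the theorem to a single round-trip claim, and note that Theorem~\ref{th:para-iso-inv-type} guarantees $\isoterm^{-1}$ is itself well-typed of type $b\iso a$, so that both sides of every equation are well-typed terms to which the reduction relation applies.

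The core is an induction on the structure of the iso $\isoterm$, mirroring the recursive definition of $(-)^{-1}$. For the abstraction, application, and iso-variable cases, I would first reduce to the base case by observing that every closed iso of a first-order type $a\iso b$ reduces, via \textsc{HIsoApp} and \textsc{IsoRec}, to a \emph{clause iso} $\clauses{\clause{v_1}{e_1}\ldots}$; since reduction preserves typing and the language is terminating with unique normal value-forms, it suffices to establish the round-trip for such clause isos and then transport it along these administrative reductions, checking that inversion commutes with them (e.g. $((\lambda f.\isoterm)\isoterm_2)^{-1}$ and $\isoterm[\isoterm_2/f]^{-1}$ reduce to a common iso). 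For the clause case itself, the argument runs as follows: since $\{v_1,\ldots,v_n\}$ is an orthogonal decomposition of $a$ and $\entailval v:a$ is closed, exhaustivity gives a unique $i$ and valuation $\sigma$ with $\sigma[v_i]=v$, so by \textsc{IsoApp} we get $\isoterm~v\to\sigma(e_i)$, which reduces further to its bottom value $\sigma(\BV(e_i))$ by evaluating the inner let-bound iso applications. Applying $\isoterm^{-1}$ then matches against $\BV(e_i)$; here I must argue that the orthogonality $\ODe{b}\{e_1,\ldots,e_n\}$ ensures this match picks out exactly the $i$-th inverted clause and recovers the same $\sigma$ on the relevant variables, after which the reversed let-chain unwinds to $\sigma(v_i)=v$.

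The main obstacle is precisely the interaction of inversion with the internal let-chains of extended values in the nontrivial clause case. A clause $v_i\iso \mathbf{let}\,p_1=\isoterm_1\,p_1'\,\mathbf{in}\cdots\mathbf{in}\,v_i'$ is inverted by reversing the order of the lets and replacing each $\isoterm_k$ by $\isoterm_k^{-1}$, so I must show that running the inverted chain undoes the forward chain pointwise on valuations. This requires an inner induction on the length of the let-chain, at each stage invoking the induction hypothesis of the main theorem for the \emph{structurally smaller} isos $\isoterm_k$ appearing in the lets (they are subterms, so the induction is well-founded), together with the linearity discipline on value-variables: the constraint that each variable occurs exactly once on each side is what guarantees that the valuation threaded through the forward pass is exactly inverted by the backward pass, with no variable duplicated or dropped. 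I would isolate this as a lemma stating that for a well-typed let-chain, $\sigma$ maps to some $\tau$ under the forward evaluation iff $\tau$ maps back to $\sigma$ under the inverted chain, and the theorem then follows by feeding the matching valuation through it. The termination hypothesis on fixpoints is used implicitly throughout to ensure all the $\to^*$ sequences actually reach normal value-forms.
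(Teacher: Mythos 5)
Most of your architecture is sound and is the natural one: the reduction to a single round-trip claim via $(\isoterm^{-1})^{-1}=\isoterm$ and Theorem~\ref{th:para-iso-inv-type}, the clause case with unique matching guaranteed by $\OD{a}$ and $\ODe{b}$, and the let-chain lemma in which linearity of value-variables guarantees that the valuation is exactly recovered. (For what it is worth, the paper itself omits the proof of this theorem; its only hint is the sentence preceding it, that the result holds ``thanks to the fact that the language is terminating'' --- and that is precisely where your argument has a hole.)

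The gap is the well-foundedness of your main induction. You induct on the structure of $\isoterm$ and justify applying the induction hypothesis to the isos $\isoterm_k$ inside the let-chains because ``they are subterms.'' This fails exactly in the recursion case: a closed iso $\mu f.\isoterm$ of type $a\iso b$ does \emph{not} reduce to a fixpoint-free clause iso. By rule $\mathrm{IsoRec}$ it unfolds to (an applied variant of) $\isoterm[(\mu f.\isoterm)/f]$, so the clause iso you end up matching against has let-bound isos in which $f$ has been replaced by the \emph{whole} fixpoint $\mu f.\isoterm$ again --- not a structural subterm. The same defect appears for $\mathrm{HIsoApp}$: $\isoterm'[\isoterm_2/f]$ is not a subterm of $(\lambda f.\isoterm')\,\isoterm_2$. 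Since recursive isos ({\tt cnot*}, {\tt map}, {\tt mapAccu}) are the interesting case, the proof as written does not cover them; and note that in the classical language fixpoints are only required to terminate in finite contexts, not to be structurally recursive, so you cannot instead recur on the list structure of $v$. The repair is to let termination itself supply the well-founded measure: for the given closed $v$, the reduction $\isoterm\,v\to^*w$ is finite, so induct on its length (equivalently, on the number of $\mathrm{IsoRec}$ unfoldings it contains), proving simultaneously your let-chain round-trip lemma. Every inner application $\isoterm_k\,(\sigma(p'_k))$ evaluated along the way has a strictly shorter reduction sequence, so the induction hypothesis applies to it whether or not it is syntactically smaller. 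With this change of measure, the rest of your argument (duality, orthogonality-based unique matching, valuation threading) goes through unchanged.
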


\section{Examples}
\label{subsec:2st-examples}

In the previous sections, we developed a novel classical reversible
language with a familiar syntax based on pattern-matching. The
language includes a limited notion of higher-order functions and
(terminating) recursive functions. We illustrate the expressiveness of
the language with a few examples and motivate the changes and
extensions needed to adapt the language to the quantum domain.

We encode booleans as follows:
$\boolt = \mathbb{1} \oplus \mathbb{1}$, $\tc = \inl{()}$, and
$\fc = \inr{()}$. One of the easiest function to define is
${\tt not}:\boolt\iso\boolt$ which flips a boolean. The controlled-not
gate which flips the second bit when the first is true can also be
expressed:
\[
  \begin{array}{l} 
  \mathtt{not} : \boolt \iso \boolt =
        \left(\begin{array}{r@{~\iso~}l}
                {\fc} & {\tc} \\
                {\tc} & {\fc}
         \end{array}\right),
   \end{array}
 \]
 \[
   \begin{array}{l}
   \mathtt{cnot} : \boolt \otimes \boolt \iso \boolt \otimes \boolt = 
     \left(\begin{array}{r@{~\iso~}l}
     \pv{\fc}{x} & \pv{\fc}{x} \\
     \pv{\tc}{\fc} & \pv{\tc}{\tc} \\
     \pv{\tc}{\tc} & \pv{\tc}{\fc}
     \end{array}\right).
   \end{array}
 \]
All the patterns in the previous two functions are orthogonal decompositions which
guarantee reversibility as desired. 

By using the abstraction facilities in the language, we can define
higher-order operations that build complex reversible functions from
simpler ones. For example, we can define a conditional expression
parameterized by the functions used in the two branches:

\[\begin{array}{l}
\mathtt{if} : (a \iso b) \to (a \iso b) \to (\mathbb{B}\otimes a \iso \mathbb{B} \otimes b) \\
\mathtt{if} = 
\lambda g. \lambda h.\left(
  \begin{array}{r@{~\iso~}l}
  \pv{\tc}{x}  & \letv{y}{g~x}{\pv{\tc}{y}} \\
    \pv{\fc}{x}  & \letv{y}{h~x}{\pv{\fc}{y}}
  \end{array}\right)
\end{array}\]

\noindent Using $\mathtt{if}$ and the obvious definition for the
identity function $\mathtt{id}$, we can define
${\tt ctrl} :: (a\iso a) \to (\boolt \otimes a \iso \boolt \otimes a)$
as ${\tt ctrl}~f = \mathtt{if}~f~{\tt id}$ and recover an alternative
definition of {\tt cnot} as ${\tt ctrl}~{\tt not}$. We can then define
the controlled-controlled-not gate (aka the Toffoli gate) by writing
${\tt ctrl}~{\tt cnot}$. We can even iterate this construction using
fixpoints to produce an $n$-controlled-not function that takes a list
of $n$ control bits and a target bit and flips the target bit iff all
the control bits are $\tc$:
\[
\begin{array}{l}
\mathtt{cnot*} : ([\mathbb{B}]\otimes\mathbb{B}) \iso ([\mathbb{B}]\otimes\mathbb{B}) \\
\mathtt{cnot*} = \mu f.\left(
  \begin{array}{r@{~}c@{~}l}
  \pv{[]}{tb}  & {}{\iso}{} & \letv{tb'}{\mathtt{not}~tb}{\pv{[]}{tb'}} \\
  \pv{\fc:cbs}{tb}  & {}{\iso}{} & \pv{\fc:cbs}{tb}  \\
  \pv{\tc:cbs}{tb}  & {}{\iso}{} & {\tt let}~\pv{cbs'}{tb'} = f~\pv{cbs}{tb}~{\tt in} 
                            ~\pv{\tc:cbs'}{tb'}
  \end{array}\!\!\right)
\end{array}
\]

The language is also expressible enough to write conventional
recursive (and higher-order) programs. We illustrate this
expressiveness using the usual $\mathtt{map}$ operation and an
accumulating variant $\mathtt{mapAccu}$:

\[
\begin{array}{l}
  {\tt map} : (a \iso b) \to ([a] \iso [b])
  \\
  \lambda g.\mu f.\left(
  \begin{array}{rcl}
    [] & {}\iso{} & []
    \\
    h:t & {}\iso{} & \letv{x}{g~h}{}
    \\
       &&\letv{y}{f~t}{x:y}
  \end{array}
         \right),
\end{array}
\]

\[
\begin{array}{l}
  {\tt mapAccu} :(a\otimes b \iso a\otimes c) \to
  (a\otimes [b] \iso a\otimes [c])
  \\
  \lambda g.\mu f.\left(
  \begin{array}{rcl}
    \pv{x}{[]} & {}\iso{} & \pv{x}{[]}
    \\
    \pv{x}{(h:t)} & {}\iso{} & {\tt let}~\pv{y}{h'} = g~\pv{x}{h}~{\tt in} \\
               && {\tt let}~\pv{z}{t'} = f~\pv{y}{t}~{\tt in}\\
               && \pv{z}{(h':t')}
  \end{array}
                  \right).
\end{array}
\]

\noindent The three examples {\tt cnot*}, {\tt map} and {\tt mapAccu}
use fixpoints which are clearly terminating in any finite context.
Indeed, the functions are structurally recursive. A formal definition
of this notion for the reversible language is as follows.

\begin{definition}\label{def:struct-rec}\rm
  Define a {\em structurally recursive type} as a type of the form
  $[a]\tensor b_1\tensor\ldots\tensor b_n$.  Let
  $\isoterm = \{ v_i \iso e_i ~|~ i\in I \}$ be an iso such that
  $
    f : a\iso b\entailiso\isoterm : a\iso c
  $
  where $a$ is a structurally recursive type.
  We say that $\mu f.\isoterm$ is {\em structurally recursive}
  provided that for each $i\in I$, the value $v_i$ is either of the
  form $\langle [], p_1, \ldots p_n\rangle$ or of the form
  $\langle h:t, p_1, \ldots p_n\rangle$.  In the former case, $e_i$
  does not contain $f$ as a free variable. In the latter case, $e_i$
  is of the form $C[f\pv{t}{p'_1,\ldots,p'_n}]$ where $C$ is a context of the form
  $C[-] ::= [-]\alt\letv{p}{C[-]}{t}\alt\letv{p}{t}{C[-]}$.
\end{definition}

\noindent This definition will be critical for quantum loops in the
next section.

\section{From Reversible Isos to  Quantum Control}

\begin{figure}[tb]
  \centering
  \begin{minipage}{.31\textwidth}
    \[\begin{blockarray}{cccc}
        & v_1 & v_2 & v_3 \\
        \begin{block}{c(ccc)}
          v'_1 ~& 1 & 0 & 0 \\
          v'_2 ~& 0 & 1 & 0 \\
          v'_3 ~& 0 & 0 & 1 \\
        \end{block}
      \end{blockarray}
    \]
    \vspace{-6ex}\caption{Classical iso}\label{fig:iso-id}\end{minipage}
  \qquad
  \begin{minipage}{.4\textwidth}
    \[
      \begin{blockarray}{cccc}
        & v_1 & v_2 & v_3 \\
        \begin{block}{c(ccc)}
          v'_1 ~& a_{11} & a_{12} & a_{13} \\
          v'_2 ~& a_{21} & a_{22} & a_{23} \\
          v'_3 ~& a_{31} & a_{32} & a_{33} \\
        \end{block}
      \end{blockarray}
    \]
    \vspace{-6ex}\caption{Quantum iso}\label{fig:iso-general}
  \end{minipage}
  \\
  \begin{minipage}{.4\textwidth}
    \[\begin{blockarray}{ccc}
        & \pv{\tc}{x} ~& \pv{\fc}{x} \\
        \begin{block}{c(cc)}
          \pv{\tc}{x} ~& \frac1{\sqrt2}{\tt Had} & \frac1{\sqrt2}{\tt Id} \\
          \pv{\fc}{x} ~& \frac1{\sqrt2}{\tt Had} & \frac{-1}{\sqrt2}{\tt
            Id} \\
        \end{block}
      \end{blockarray}
    \]
    \vspace{-6ex}\caption{Semantics of {\tt Gate}}\label{fig:sem-gate}
    \end{minipage}
\end{figure}

\noindent In the language presented so far, an iso $\isoterm:a\iso b$
describes a bijection between the set $\base{a}$ of closed values of
type $a$ and the set $\base{b}$ of closed values of type $b$. If one
regards $\base{a}$ and $\base{b}$ as the basis elements of some vector
space $\denot{a}$ and $\denot{b}$, the iso $\isoterm$ becomes a 0/1
matrix.

As an example, consider an iso $\isoterm$ defined using three clauses
of the form
\[
  \clauses{\clause{v_1}{v'_1}\clause{v_2}{v'_2}\clause{v_3}{v'_3}}.
  \]
From the exhaustivity and non-overlapping conditions derives the fact
that the space $\denot{a}$ can be split into the direct sum of the
three subspaces $\denot{a}_{v_i}$ ($i=1,2,3$) generated
by~$v_i$. Similarly, $\denot{b}$ is split into the direct sum of the
subspaces $\denot{b}_{v'_i}$ generated by~$v'_i$. One can therefore
represent $\isoterm$ as the matrix $\denot{\isoterm}$ in
Fig.~\ref{fig:iso-id}: The ``$1$'' in each column $v_i$ indicates to
which subspace $\denot{b}_{v'_j}$ an element of $\denot{a}_{v_i}$ is
sent to.

In Sec.~\ref{sec:1st-order-finite} we discussed the fact that
$v_i\bot v_j$ when $i\neq j$. This notation hints at the fact that
$\denot{a}$ and $\denot{b}$ could be seen as Hilbert spaces and the
mapping $\denot{\isoterm}$ as a unitary map from $\denot{a}$ to
$\denot{b}$. The purpose of this section is to extend and formalize
precisely the correspondence between isos and unitary maps.

\begin{figure}[t]
  \[
    \left\{
    \begin{array}{l}
      \clause{v_1~}{~a_{11}v'_1+a_{21}v'_2+a_{31}v'_3}\\
      \clause{v_2~}{~a_{12}v'_1+a_{22}v'_2+a_{23}v'_3}\\
      \clause{v_3~}{~a_{31}v'_1+a_{32}v'_2+a_{33}v'_3}
    \end{array}
  \right\}
\]
\caption{Illustrating the generalization of clauses}\label{fig:iso-gen-ex}
\end{figure}
The definition of clauses is extended following this idea of seeing
isos as unitaries, and not only bijections on basis elements of the
input space. We
therefore essentially propose to generalize the clauses to complex,
linear combinations of values on the right-hand-side, such as shown in
Figure~\ref{fig:iso-gen-ex}, with the side conditions on that the matrix of
Fig.~\ref{fig:iso-general} is unitary. We define in
Sec.~\ref{sec:unitlang} how this extends to second-order.

\subsection{Extending the Language to Linear Combinations of Terms}
\label{sec:unitlang}

The quantum unitary language extends the reversible language from the
previous section by closing extended values and terms under complex,
finite linear combinations. For example, if $v_1$ and $v_2$ are values
and $\alpha$ and $\beta$ are complex numbers,
$\alpha\cdot v_1 + \beta\cdot v_2$ is now an extended value.

Several approaches exist for performing such an extension. One can
update the reduction strategy to be able to reduce these sums and
scalar multiplications to normal
forms~\cite{lineal,alglambdacalcreview}, or one can instead consider
terms modulo the usual algebraic
equalities~\cite{Vaux09,alglambdacalcreview}: this is the strategy we
follow for this paper.

When extending a language to linear combination of terms in a naive
way, this added structure might generate inconsistencies in the
presence of unconstrained
fixpoints~\cite{Vaux09,lineal,alglambdacalcreview}. The weak condition
on termination we imposed on fixpoints in the classical language was
enough to guarantee reversibility. With the presence of linear
combinations, we want the much stronger guarantee of unitarity. For
this reason, we instead impose fixpoints to be {\em structurally
  recursive}.

The quantum unitary language is defined by allowing sums of terms and
values and multiplications by complex numbers: if $t$ and $t'$ are
terms, so is $\alpha\cdot t + t'$. Terms and values are taken modulo
the equational theory of modules. We furthermore consider the value
and term constructs $\pv{-}{-}$, $\letv{p}{-}{-}$, $\inl(-)$,
$\inr(-)$ distributive over sum and scalar multiplication.
We do {\em not} however take iso-constructions as distributive over
sum and scalar multiplication:
$\clauses{\clause{v_1}{\alpha v_2 + \beta v_3}}$ is {\em not} the same
thing as
$\alpha\clauses{\clause{v_1}{v_2}} +
\beta\clauses{\clause{v_1}{v_3}}$.
This is in the spirit of Lineal~\cite{lineal,linvec}.

Formally, the quantum unitary language is defined as follows.
\begin{alignat*}{100}
&\text{Val }\& ~\text{term types} \quad& a, b &&&::=~ &&  
                        \mathbb{1} \alt a \oplus b \alt a \otimes b \alt [a]
                          \\
&\text{Iso types} & T &&&::=&&  a \iso b \alt (a \iso b) \to T \\[1.5ex]
&\text{Pure values} & v &&&::=&&
                       () \alt x \alt \inl{v} \alt \inr{v} \alt
                        \pv{v_1}{v_2}\\
&\text{Combination of values} & e &&&::=&&
                       v \alt e_1 + e_2 \alt \alpha e\\
&\text{Products} & p &&&::=&&
                       () \alt x \alt \pv{p_1}{p_2}\\
&\text{Extended Values} & e &&&::=&& v \alt
                                 \letv{p_1}{\isoterm~p_2}{e}\\
&\text{Isos} & \isoterm &&&::=&& 
                          \clauses{\clause{v_1}{e_1}\clause{v_2}{e_2}~\ldots}
                          \alt \lambda f.\isoterm \alt \\
&            &          &&&   && \mu f.\isoterm \alt f \alt 
                                          \isoterm_1\,\isoterm_2 \\[2ex]
&\text{Terms}  & t &&&::=&& () \alt x \alt \inl{t} \alt \inr{t} \alt
                        \pv{t_1}{t_2} \alt \\ 
&              &   &&&   && \isoterm~t \alt \letv{p}{t_1}{t_2}\alt t_1 + t_2 \alt \alpha\cdot t.
\end{alignat*}
The scalar $\alpha$ ranges over complex numbers. Extended terms and
values are considered modulo associativity and commutativity of the
addition, and modulo the equational theory of modules:
\begin{align*}
  \alpha\cdot(e_1 + e_2) &= \alpha\cdot{}e_1 + \alpha\cdot{}e_2 &
  1\cdot{}e &= e\\
  \alpha\cdot{}e + \beta\cdot{}e &= (\alpha+\beta)\cdot{}e&
  \alpha\cdot(\beta\cdot{}e) &= (\alpha\beta)\cdot e\\
  0\cdot e_1 + e_2 &= e_2
\end{align*}
We furthermore consider the value and term constructs $\pv{-}{-}$,
$\letv{p}{-}{-}$, $\inl(-)$, $\inr(-)$ distributive over sum and
scalar multiplication.

The typing rules for terms and extended values are updated
as shown in
Table~\ref{tab:quantum-types}.
We only allow linear combinations of
terms and values of the same type and of the same free variables. An
iso is now not only performing an ``identity'' as in
Figure~\ref{fig:iso-id} but a true unitary operation.
Finally, fixpoints are now required to be {\em structurally
  recursive}, as introduced in Definition~\ref{def:struct-rec}.

\begin{table}[tb]
\[
\begin{array}{c}
\infer{\Delta;\Psi\entailval\alpha\cdot t:a }{\Delta;\Psi\entailval t:a}
\quad
\infer{
  \Delta;\Psi\entailval t_1+t_2 : a 
}{
  \Delta;\Psi\entailval t_1 : a
  & 
  \Delta;\Psi\entailval t_2 : a
}
\\[2ex]
\infer{
  \Psi\entailiso 
  \left\{
  \begin{array}{ccc}
    v_1 & \iso & a_{11}\cdot e_1 + \cdots + a_{1n}\cdot e_n
    \\
        &\ldots&
    \\
    v_n & \iso & a_{n1}\cdot e_1 + \cdots + a_{nn}\cdot e_n
  \end{array}\right\}
  : a \iso b.
}{
  \begin{array}{ll@{~~~}l}
    \Delta_1;\Psi \entailval v_1 : a 
    &
      \ldots
    &
      \Delta_n;\Psi\entailval v_n : a
      \\
    \Delta_1;\Psi \entailval e_1 : b 
    &
      \ldots
    &
      \Delta_n;\Psi\entailval e_n : b
    \\
    \OD{a}\{v_1,\ldots,v_n\}
    &&
       \ODe{b}\{e_1,\ldots,e_n\}
  \end{array}
  &
  \begin{pmatrix}
    a_{11} & \cdots &a_{1n}\\
    \vdots & & \vdots\\
    a_{n1} & \cdots & a_{nn}
  \end{pmatrix}  
  \text{ is unitary}
}
\\[6ex]
\inferrule{
  \Psi, f:a\iso b\entailiso \isoterm : (a_1\iso b_1)\to\cdots \to(a_n\iso
  b_n)\to(a\iso b)
  \\
  \text{$\mu f.\isoterm$ is structurally recursive}
}
{
  \Psi\entailiso \mu f.\isoterm : (a_1\iso b_1)\to\cdots \to(a_n\iso b_n)\to(a\iso b)
}
\end{array}
\]
\caption{Typing rules for the quantum extension}
\label{tab:quantum-types}
\end{table}

The reduction is updated to stay deterministic in this extended
setting. It is split into two parts: the reduction of pure terms,
i.e. non-extended terms or values, and linear combinations thereof.
\begin{itemize}
\item Pure terms and values reduces using the reduction rules found in
  Table~\ref{tab:reduction}. We do not extend applicative contexts to
  linear combinations. The only slightly modified rule is the rule
  (IsoApp): the $t_i$ might now be a linear combination of pure terms
  $\sum_j\alpha_j\cdot t'_j$. We define $\sigma(t_i)$ and
  $\sum_j\alpha_j\cdot\sigma(t'_j)$. Because of the constraint on the
  typing rule for linear combinations, as long as $t_i$ is well-typed
  then the substitution is well-defined on all the $t'_j$.
\item Consider the linear combination of pure terms
  $\sum_i\alpha_i\cdot t_i + \sum_j\beta_j\cdot t'_j$, where the $t_j$
  are in normal form but the $t_i$ are not: $t_i\to t''_i$. Then 
  \[
    \sum_i\alpha_i\cdot t_i + \sum_j\beta_j\cdot t'_j
    \to
    \sum_i\alpha_i\cdot t''_i + \sum_j\beta_j\cdot t'_j
  \]
  Note that this extended reduction relation is deterministic.
\end{itemize}

\begin{example}
\label{ex:had}\label{sec:example-quantum}
This allows one to define an iso behaving as the Hadamard gate, or a
slightly more complex iso conditionally applying another iso, whose
behavior as a matrix is shown in Fig.~\ref{fig:sem-gate}.

\[
\begin{array}{l}
       \mathtt{Had} : \boolt \iso \boolt \\
       \left(
       \begin{array}{r@{~}c@{~}l}
         \tc & {}\iso{} & \frac1{\sqrt2}\tc + \frac1{\sqrt2}\fc
         \\
         \fc & {}\iso{} & \frac1{\sqrt2}\tc - \frac1{\sqrt2}\fc
       \end{array}
      \right)\!,
    \end{array}
  \]
  \[
    \begin{array}{l}
       \mathtt{Gate} : \boolt\otimes\boolt \iso \boolt\otimes\boolt \\
       \left(
       \begin{array}{r@{~}c@{~}l}
         \pv{\tc}{x} & {}\iso{} & \letv{y}{{\tt Had}\,x}{\frac1{\sqrt2}\pv{\tc}{y} + \frac1{\sqrt2}\pv{\fc}{y}}
         \\
         \pv{\fc}{x} & {}\iso{} & \letv{y}{{\tt Id}\,x}{~~\frac1{\sqrt2}\pv{\tc}{y} - \frac1{\sqrt2}\pv{\fc}{y}}
       \end{array}\right)\!.
     \end{array}
  \]
\end{example}

With this extension to linear combinations of terms, one can
characterize normal forms as follows.

\begin{lemma}[Structure of the normal forms]
  \label{lem:isonorm1}
  Let $\isoterm$ be such that
  $\entailiso\isoterm : a\iso b$. For all closed values $v$ of type
  $a$, the term $\isoterm\,v$ rewrites to a normal form
  $
    \sum_{i=1}^N\alpha_i\cdot w_i
  $
  where $N<\infty$, each $w_i$ is a closed value of type $b$ and
  $\sum_i|\alpha_i| = 1$.
\end{lemma}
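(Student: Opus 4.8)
The plan is to prove the lemma by structural induction on the derivation of the iso typing judgment $\entailiso \isoterm : a \iso b$, strengthened to carry information about the form of the normal form at every type, not just that the coefficients sum to $1$ in absolute value. First I would establish that the rewrite system terminates on $\isoterm\,v$ (this follows from the structural-recursion condition on fixpoints together with the earlier claim that the classical system is terminating and the determinism of the extended reduction), so that a normal form exists and is reached after finitely many steps. The finiteness $N<\infty$ then follows because each reduction step produces only finitely many summands: the (IsoApp) rule replaces a value by the finite linear combination on the right-hand side of the matching clause, and all the value/term constructors distribute over a finite sum, so no step can introduce infinitely many terms.

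\medskip

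The core of the argument is identifying what a normal form of type $b$ must look like and why $\sum_i|\alpha_i|=1$. I would first prove a syntactic characterization: any closed term of type $b$ in normal form is a finite linear combination $\sum_i \alpha_i\cdot w_i$ where the $w_i$ are closed \emph{pure} values of type $b$ (no $\isoterm$-applications or lets remain, since those are all reducible on closed inputs), and I may assume the $w_i$ are pairwise distinct by collecting like terms via the module equations. The claim $\sum_i|\alpha_i|=1$ is where the \emph{unitarity} hypothesis does all the work. The plan is to track the quantity $\sum_i|\alpha_i|$ as an invariant. For a single non-parametric iso whose matrix $(a_{jk})$ is unitary, applying $\isoterm$ to a basis value $v_k$ yields the $k$-th column $\sum_j a_{jk}\cdot v'_j$, and since each column of a unitary matrix is a unit vector in the $\ell^2$ norm, $\sum_j |a_{jk}|^2 = 1$. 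The subtlety is that the lemma asserts the $\ell^1$-type condition $\sum_i|\alpha_i|=1$, not $\sum_i|\alpha_i|^2=1$; I would need to check carefully which normalization the authors intend, and I expect the intended reading is that on a single \emph{basis} input the output column has exactly the entries $a_{jk}$, and the stated invariant should really be read as the appropriate norm being preserved. This discrepancy between $\ell^1$ and $\ell^2$ is the first place I would slow down and reconcile the statement with the unitarity hypothesis.

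\medskip

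The genuinely hard part will be the inductive cases for the iso-combinators, and especially for the structurally recursive fixpoint $\mu f.\isoterm$. For application and abstraction the invariant passes through because substituting a closed iso for $f$ does not change the norm-preservation property, appealing to the induction hypothesis on the subderivations. For the $\mathbf{let}$-construct inside an extended value $e$, I would use that $\mathbf{let}\,p=\isoterm'\,p'\,\mathbf{in}\,e'$ composes two norm-preserving maps by the distributivity conventions, so the composite preserves the relevant norm; this requires knowing that $\isoterm'$ applied to a basis value again lands in the span of basis values with the right norm, which is exactly the induction hypothesis. For the fixpoint, the reduction rule (IsoRec) unfolds $\mu f.\isoterm$ one layer at a time, and on a closed input of structurally recursive type $[a]\tensor b_1\tensor\cdots\tensor b_n$ the recursion descends along the finite list, terminating after a number of unfoldings bounded by the list length. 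I would argue by an inner induction on that length that each finite unfolding is norm-preserving, using the structural-recursion shape from Definition~\ref{def:struct-rec} (the recursive call occurs exactly once, guarded by a context of lets) to guarantee that the recursive call is composed linearly and no duplication or dropping of the amplitude occurs. The main obstacle is making this compositional norm-preservation precise across the fixpoint unfolding while the output lives in a potentially unbounded-dimensional space $\denot{b}$: I must ensure the finitely many summands and the preserved norm are maintained uniformly across all unfolding depths, which is where the termination-in-finite-context guarantee and the structural-recursion discipline are both essential.
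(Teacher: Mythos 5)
Your overall plan lines up with the paper's (very terse) proof: termination of $\isoterm\,v$ is extracted from the structural-recursion discipline on fixpoints, and the shape of the normal form is then established by an induction that tracks how sums are introduced, using the clause typing restriction and determinism of the reduction. The difference is the induction itself: the paper inducts on the maximal number of rewrite steps needed to reach the normal form, whereas you do structural induction on the typing derivation with an inner induction on list length at fixpoints. Your organization is workable precisely because of that inner induction --- plain structural induction would break at (IsoRec), since unfolding $\mu f.\isoterm$ produces $\isoterm[(\mu f.\isoterm)f_1\ldots f_n/f]$, a \emph{larger} term rather than a subterm; the paper's step-count induction sidesteps this issue entirely.

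Your hesitation about $\ell^1$ versus $\ell^2$ is not a defect of your proof but of the statement: as literally written the lemma is false. The term $\mathtt{Had}\;\tc$ normalizes to $\frac1{\sqrt2}\cdot\tc+\frac1{\sqrt2}\cdot\fc$, whose coefficients satisfy $\sum_i|\alpha_i|=\sqrt2\neq1$. What unitarity actually propagates is $\sum_i|\alpha_i|^2=1$, and that is the invariant your induction should carry. (The paper's one-line proof does not address this, and the $\ell^1$ form is even reused downstream in the proof of Theorem~\ref{th:unitwdef}.) So your instinct to ``read the statement as the appropriate norm being preserved'' is the correct resolution, with the $\ell^2$ norm being that norm.

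There is, however, one genuine gap in your norm induction: unitarity of the coefficient matrix alone is not enough to push the invariant through the clause case. After matching clause $k$ with valuation $\sigma$, the redex reduces to $\sum_j a_{kj}\cdot\sigma(e_j)$, and if each $\sigma(e_j)$ normalizes to $W_j$ with $\|W_j\|_2=1$, the norm of the total only comes out as $\sum_j|a_{kj}|^2\cdot\|W_j\|_2^2=1$ provided the $W_j$ have \emph{pairwise disjoint supports}; otherwise amplitudes from different branches interfere and the norm is not preserved (e.g., two clauses whose right-hand sides could produce the same closed value would let $\frac1{\sqrt2}+\frac1{\sqrt2}=\sqrt2$ appear as a single coefficient). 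That disjointness is exactly what the side condition $\ODe{b}{\{e_1,\ldots,e_n\}}$ buys: the bottom values $\BV(e_j)$ of distinct clauses are orthogonal patterns, so their closed instances are distinct values of type $b$. Your proposal invokes the induction hypothesis on each let-chain separately but never uses this orthogonality; it is the ingredient the paper alludes to with ``the restriction on the introduction of sums in the typing rule for clauses,'' and without it the clause case of your induction does not close.
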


\begin{proof}
  The fact that $\isoterm\,v$ converges to a normal form is a
  corollary of the fact that we impose structural recursion on
  fixpoints. The property of the structure of the normal form is then
  proven by induction on the maximal number of steps it takes to reach
  it. It uses the restriction on the introduction of sums in the
  typing rule for clauses in isos and the determinism of the
  reduction.
\end{proof}

In the classical setting, isos describe bijections between sets of
closed values: it was proven by considering the behavior of an iso
against its inverse. In the presence of linear combinations of terms,
we claim that isos describe more than bijections: they describe
unitary maps. In the next section, we discuss how types can be
understood as Hilbert spaces (Sec.~\ref{sec:types-hilb}) and isos as
unitary maps (Secs~\ref{sec:isos-blmpas} and~\ref{sec:isos-umaps}).

\subsection{Modeling Types as Hilbert Spaces}
\label{sec:types-hilb}

By allowing complex linear combinations of terms, closed normal forms
of finite types such as $\boolt$ or $\boolt\otimes\boolt$ can be
regarded as complex vector spaces with basis consisting of closed
values. For example, $\boolt$ is associated with
$\denot{\boolt}=\{\alpha\cdot\tc +
\beta\cdot\fc~|~\alpha,\beta\in\Cx\}\equiv\Cx^2$.  We can consider
this space as a complex Hilbert space where the scalar product is
defined on basis elements in the obvious way: $\scalprod{v}{v} = 1$
and $\scalprod{v}{w} = 0$ if $v\neq w$. The map ${\tt Had}$ of
Ex.~\ref{ex:had} is then effectively a unitary map on the space
$\denot{\boolt}$.

The problem comes from lists: the type $[\mathbb{1}]$ is inhabited by
an infinite number of closed values: $[]$, $[()]$, $[(),()]$,
$[(),(),()]$,\ldots To account for this case, we need to consider
infinitely dimensional complex Hilbert spaces. In general, a complex
Hilbert space~\cite{HSBook} is a complex vector space endowed with a
scalar product that is complete with respect the distance induced by
the scalar product. The completeness requirement implies for example
that the infinite linear combination
$[] + \frac12\cdot[()] + \frac14[(),()] + \frac18[(),(),()] + \cdots$
needs to be an element of $\denot{[\boolt]}$. To account for these
limit elements, we propose to use the standard~\cite{HSBook} Hilbert
space $\ell^2$ of infinite sequences.

\begin{definition}\label{def:hilb}\rm
  Let $a$ be a value type. As before, we write $\base{a}$ for the set
  of closed values of type $a$, that is,
  $\base{a} = \{ v ~|~ \entailval v:a \}$. The {\em span of $a$} is
  defined as the Hilbert space $\denot{a} = \ell^2(\base{a})$
  consisting of sequences $(\phi_v)_{v\in\base{a}}$ of complex numbers
  indexed by $\base{a}$ such that
  $\sum_{v\in\base{a}}|\phi_v|^2<\infty$. The scalar product on this
  space is defined as
  $\scalprod{(\phi_v)_{v\in\base{a}}}{(\psi_v)_{v\in\base{a}}} =
  \sum_{v\in\base{a}} \overline{\phi_v}\psi_v$.
\end{definition}

We shall use the following conventions. A closed value $v$ of
$\denot{a}$ is identified with the sequence
$(\delta_{v,v'})_{v'\in\base{a}}$ where $\delta_{v,v} = 1$ and
$\delta_{v,v'}=0$ if $v\neq v'$. An element $(\phi_v)_{v\in\base{a}}$
of $\denot{a}$ is also written as the infinite, formal sum
$\sum_{v\in\base{a}}\phi_v\cdot v$.

\subsection{Modeling Isos as Bounded Linear Maps}
\label{sec:isos-blmpas}

We can now define what is the linear map associated to an iso.

\begin{definition}\label{def:linmap}\rm
  For each closed iso $\entailiso\isoterm : a\iso b$ we define
  $\denot{\isoterm}$ as the linear map from $\denot{a}$ to $\denot{b}$
  sending the closed value $v:a$ to the normal form of 
  $\isoterm\,v:b$ under the rewrite system.
\end{definition}

In general, the fact that $\denot{\isoterm}$ is well-defined is not
trivial. If it is formally stated in Theorem~\ref{th:unitwdef}, we can
first try to understand what could go wrong.
The problem comes from the fact that the space $\denot{a}$ is not
finite in general. Consider the iso
${\tt map}~{\tt Had} : [\boolt]\iso[\boolt]$. Any closed value
$v:[\boolt]$ is a list and the term $({\tt map}~{\tt Had})\,v$
rewrites to a normal form consisting of a linear combination of lists.
Denote the linear combination associated to $v$ with $L_v$.
An element of $\denot{[\boolt]}$ is a sequence
$ \phi = (\phi_v)_{v\in\base{[\boolt]}} $. From
Definition~\ref{def:linmap}, the map $\denot{\isoterm}$ sends the
element $\phi\in\denot{[\boolt]}$ to
$ \sum_{v\in\base{[\boolt]}} \phi_v \cdot L_v. $ This is an infinite
sum of sums of complex numbers: we need to make sure that it is
well-defined: this is the purpose of the next result. Because of the
constraints on the language, we can even show that it is a {\em
  bounded} linear map.

In the case of the map ${\tt map}~{\tt Had}$, we can understand why it
works as follows. The space $\denot{[\boolt]}$ can be decomposed as
the direct sum $\sum_{i=0}^\infty E_i$, where $E_i$ is generated with
all the lists in $\boolt$ of size $i$. The map ${\tt map}~{\tt Had}$
is acting locally on each finitely-dimensional subspace $E_i$. It is
therefore well-defined. Because of the unitarity constraint on the
linear combinations appearing in {\tt Had}, the operation performed by
${\tt map}~{\tt Had}$ sends elements of norm 1 to elements of norm 1.
This idea can be formalized and yield the following theorem.

\begin{theorem}
  \label{th:unitwdef}
  For each closed iso $\entailiso\isoterm : a\iso b$ the linear map
  $\denot{\isoterm} : \denot{a}\to\denot{b}$ is well-defined and
  bounded.
\end{theorem}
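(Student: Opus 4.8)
The plan is to decompose the (possibly infinite-dimensional) space $\denot{a}$ into finite-dimensional subspaces on which $\denot{\isoterm}$ acts separately, reducing the well-definedness and boundedness of the whole map to a uniform bound on these finite pieces. The guiding intuition is exactly the one stated for $\mathtt{map}~\mathtt{Had}$ right before the theorem: structural recursion forces an iso to respect some notion of ``size'' of a value, so that a basis value of a given size is sent to a finite linear combination of values of bounded size, and distinct size-strata are mapped to orthogonal or at least independent strata.

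\medskip

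\noindent\textbf{Step 1: finiteness of each image.} First I would invoke Lemma~\ref{lem:isonorm1}: for every closed value $v:a$, the term $\isoterm\,v$ rewrites to a finite normal form $\sum_{i=1}^{N_v}\alpha_i^{(v)}\cdot w_i^{(v)}$ with each $w_i^{(v)}$ a closed value of type $b$ and $\sum_i|\alpha_i^{(v)}| = 1$. Writing $L_v$ for this normal form, this already guarantees that $\denot{\isoterm}$ is well-defined on each basis vector $v$, with the crucial quantitative control $\|L_v\|_1 = 1$ (and hence $\|L_v\|_2 \le 1$ in the $\ell^2$ norm, since the $\ell^2$ norm is dominated by the $\ell^1$ norm).

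\medskip

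\noindent\textbf{Step 2: stratification by size.} Next I would make precise the decomposition $\denot{a} = \bigoplus_{n} E_n$ into finite-dimensional subspaces. For the list-based structurally recursive types of Definition~\ref{def:struct-rec}, the natural grading is by the length of the list argument (and more generally by the total syntactic size of a value); the key claim is that $\denot{\isoterm}$ preserves this grading, i.e.\ there is a function $s$ on sizes with $\denot{\isoterm}(E_n)\subseteq E_{s(n)}$, and that $s$ is injective so that the images of distinct strata are independent. This is where Definition~\ref{def:struct-rec} does the work: in the recursive clause $e_i = C[f\pv{t}{p'_1,\ldots,p'_n}]$ the head is peeled off and the recursive call is made on the strictly shorter tail, so the length is transported in a controlled, size-respecting way, while the base clause handles the empty list without recursion. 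I would argue each $E_n$ is finite-dimensional because there are only finitely many closed values of a fixed size.

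\medskip

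\noindent\textbf{Step 3: assembling the bound.} Having established that $\denot{\isoterm}$ restricts to a linear map $E_n \to \denot{b}$ with operator norm at most $1$ (from Step~1, since every basis element of $E_n$ has image of $\ell^2$-norm $\le 1$, and in fact the unitarity constraint in Table~\ref{tab:quantum-types} makes each finite block a genuine isometry), and that the images $\denot{\isoterm}(E_n)$ are mutually orthogonal, I would conclude that for $\phi = \sum_n \phi^{(n)}$ with $\phi^{(n)}\in E_n$ one has $\|\denot{\isoterm}\,\phi\|^2 = \sum_n \|\denot{\isoterm}\,\phi^{(n)}\|^2 \le \sum_n \|\phi^{(n)}\|^2 = \|\phi\|^2$. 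This simultaneously shows the infinite sum $\sum_{v}\phi_v\cdot L_v$ converges in $\ell^2$ (so $\denot{\isoterm}$ is well-defined on all of $\denot{a}$, not merely on finite combinations of basis vectors) and that it is bounded with norm $\le 1$.

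\medskip

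\noindent\textbf{The hard part} will be Step~2: rigorously defining the grading for arbitrary value types (not just $[\boolt]$) and proving that a general structurally recursive iso, including its higher-order parameters $\isoterm_1,\ldots,\isoterm_n$ and nested \texttt{let}-contexts $C[-]$, genuinely preserves the stratification and sends distinct strata to independent subspaces. One must handle the interplay between the recursive call on the tail and the surrounding context $C$, and verify that the parameters supplied to the fixpoint do not themselves change sizes in an uncontrolled way; this requires an induction on the structure of the iso mirroring the termination/structural-recursion argument, and care that the ``size'' invariant is the right one for the boundedness estimate rather than merely for termination.
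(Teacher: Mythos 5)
Your plan does not follow the paper's proof of this theorem, and its central step rests on a claim that is false. The paper proves well-definedness and boundedness by a direct, global estimate: writing $W_v$ for the normal form of $\isoterm\,v$ given by Lemma~\ref{lem:isonorm1}, it expands the $w$-coordinate of the image as $(\denot{\isoterm}(e))_w=\sum_{v\in\base{a}}e_v\,(W_v)_w$, checks absolute convergence, and bounds the norm of the output by the norm of the input using only the normalization $\sum_i|\alpha^v_i|=1$ and an exchange of the order of summation; no decomposition of $\denot{a}$ into strata appears anywhere. The stratification you propose is the idea the paper reserves for the later, harder result on unitarity in the presence of recursion (Theorem~\ref{th:unitwithrec}).

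The gap is in your Step 2: there is in general no size or length grading with $\denot{\isoterm}(E_n)\subseteq E_{s(n)}$, even for fixpoint-free isos. Consider the closed iso of type $[\boolt]\iso[\boolt]$ with clauses
\[
\left\{
\begin{array}{r@{~~}c@{~~}l}
{[]} & \iso & \tfrac{1}{\sqrt2}\cdot[] \;+\; \tfrac{1}{\sqrt2}\cdot(\tc:[])\\
\tc:[] & \iso & \tfrac{1}{\sqrt2}\cdot[] \;-\; \tfrac{1}{\sqrt2}\cdot(\tc:[])\\
\fc:[] & \iso & \fc:[]\\
\tc:h:t & \iso & \tc:h:t\\
\fc:h:t & \iso & \fc:h:t
\end{array}
\right\}.
\]
Both sides are orthogonal decompositions of $[\boolt]$, the nontrivial linear combinations involve only closed values (so the same-free-variables constraint is satisfied), and the $5\times5$ coefficient matrix is unitary, so this iso is well-typed; it contains no fixpoint, so the phenomenon has nothing to do with structural recursion. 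It sends the empty list to a superposition of lists of lengths $0$ and $1$ (hence also of different syntactic sizes), so neither the length grading nor the size grading is preserved, and the images of the strata $E_0$ and $E_1$ are not contained in distinct strata of the codomain. Those two images \emph{are} orthogonal, but only because the Hadamard-like $2\times2$ block is unitary --- precisely the fact your grading argument was meant to bypass. This matters because your Step 3 genuinely needs orthogonality of the stratum images: the weakened form ``or at least independent'' is not enough, since a map that is an isometry on each finite-dimensional stratum but whose stratum images are pairwise distinct yet asymptotically parallel is unbounded on $\ell^2$. So to carry out your route you would have to prove that images of distinct strata are orthogonal, i.e.\ that inner products of basis values are preserved --- but that is the unitarity theorem itself, which the paper obtains only later by unfolding fixpoints and invoking the compositional argument of Theorem~\ref{th:unitnorec}. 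As a proof of mere boundedness your plan is therefore circular, or at best must import the stronger results that this theorem is supposed to precede, whereas the paper's summation estimate needs nothing beyond Lemma~\ref{lem:isonorm1}.
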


\begin{proof}
  Consider a general element $e = (e_v)_{v\in\base{a}}$ of
  $\denot{a}$. Using Lemma~\ref{lem:isonorm1}, to each $v\in\base{a}$
  one can attach a finite linear combination
  \[
    W_v = \sum_{i=1}^{N_v}\alpha^v_i\cdot w^v_i
  \]
  such that $\isoterm\,v$ rewrites to $W_v$, with
  $\sum_i|\alpha^v_i|=1$.
  By definition, $\denot{\isoterm}(e)$ is a sequence indexed by
  $\base{b}$ where for all $w$ in $\base{b}$,
  \[
    (\denot{\isoterm}(e))_w = 
    \left(\sum_{v\in\base{a}} e_v\cdot W_v\right)_w
    =\sum_{v\in\base{a}} e_v\cdot \left(W_v\right)_w
  \]
  This series is absolutely converging, and therefore
  well-defined. Indeed,
  \begin{alignat*}{100}
    \sum_{v\in\base{a}} \left|e_v\cdot \left(W_v\right)_w\right|
    &
    = \sum_{v\in\base{a}} |e_v|\cdot \left|\left(W_v\right)_w\right|
    \\
    &\leq \sum_{v\in\base{a}} |e_v| & \text{because $|(W_v)_w|\leq 1$}
    \\
    &\leq \infty & \text{because $e\in\base{a}$.}
  \end{alignat*}
  To show that $\denot{\isoterm}(e)$ is indeed an element of
  $\base{b}$, we need to show that its norm is well-defined.  We show
  this by bounding it with the norm of the vector $e$.
  \begin{alignat*}{100}
    &\sum_{w\in\base{b}}|(\denot{\isoterm}(e))_w|
    \\
    & = \sum_{w\in\base{b}}\left|\sum_{v\in\base{a}} e_v\cdot
      \left(W_v\right)_w\right|
    \\
    & \leq \sum_{w\in\base{b}}\sum_{v\in\base{a}} \left|e_v\cdot
      \left(W_v\right)_w\right|
    \\
    & = \sum_{w\in\base{b}}\sum_{v\in\base{a}} |e_v|\cdot
      |\left(W_v\right)_w|
    \\
    & = \sum_{v\in\base{a}}\sum_{w\in\base{b}} |e_v|\cdot
    |\left(W_v\right)_w|
    \\
    & = \sum_{v\in\base{a}}|e_v|\cdot\left(\sum_{w\in\base{b}} 
    |\left(W_v\right)_w|\right)
    \\
    & = \sum_{v\in\base{a}}|e_v|\cdot\left(\sum_{i=1}^{N_v} 
    |\alpha^v_i|\right)
    \\
    & = \sum_{v\in\base{a}}|e_v|
    \\
    & = \|e\|.
  \end{alignat*}
  This concludes the proof that the linear map $\denot{\isoterm}$ is
  not only well-defined between the vector spaces $\denot{a}$ and
  $\denot{b}$ but also bounded.
\end{proof}

\subsection{Modeling Isos as Unitary Maps}
\label{sec:isos-umaps}

In this section, we show that not only closed isos can be modeled as
bounded linear maps, but that these linear maps are in fact unitary
maps. The problem comes from fixpoints. We first consider the case of
isos written without fixpoints, and then the case with fixpoints.

\paragraph{Without recursion.}
The case without recursion is relatively easy to treat, as the linear
map modeling the iso can be compositionally constructed out of
elementary unitary maps.

\begin{theorem}\label{th:unitnorec}
  Given a closed iso $\entailiso\isoterm : a\iso b$ defined without
  the use of recursion, the linear map
  $\denot{\pi} : \denot{a} \to \denot{b}$ is unitary.
\end{theorem}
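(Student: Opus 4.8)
The plan is to argue by structural induction on the recursion-free iso $\isoterm$, reducing first to a convenient normal form and then verifying unitarity of the associated clause set directly. Since $\denot{-}$ is defined only on closed isos of first-order type $a\iso b$ (Definition~\ref{def:linmap}) and is computed through normal forms of $\isoterm\,v$, I may replace $\isoterm$ by its iso-normal form without changing $\denot\isoterm$: all higher-order applications $\isoterm_1\isoterm_2$ and abstractions $\lambda f.\isoterm$ are eliminated by $\mathrm{HIsoApp}$, leaving a clause set in which every iso occurring in a \textbf{let} on a right-hand side is itself closed, recursion-free, and a strict subterm. By the induction hypothesis each such let-bound iso denotes a unitary, and this is the only place the hypothesis is used.

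Next I would set up the two orthogonal decompositions coming from the typing rule of Table~\ref{tab:quantum-types}, writing each clause in the shared-basis form $v_i \iso \sum_j a_{ij}\cdot e_j$ with $(a_{ij})$ unitary. From $\OD{a}{\{v_1,\dots,v_n\}}$ the space $\denot a$ splits as an orthogonal direct sum $\bigoplus_i V_i$, where $V_i$ is spanned by the closed instances $\sigma(v_i)$; the assignment $\sigma\mapsto\sigma(v_i)$ identifies $V_i$ isometrically with the span $\denot{\Delta_i}$ of closed valuations of the pattern context $\Delta_i$. Likewise, the condition $\ODe{b}{\{e_1,\dots,e_n\}}$ gives an orthogonal decomposition $\denot b = \bigoplus_j W_j$ through the bottom values $\BV(e_j)$. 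The key observation is that evaluating $\sigma(e_j)$ amounts to running the let-bound isos on the relevant components of $\sigma$ — a unitary $U_j$ on $\denot{\Delta_i}$ by the induction hypothesis — and then substituting into $\BV(e_j)$, so $\sigma(e_j) = \bigl(U_j\sigma\bigr)\bigl(\BV(e_j)\bigr)$ lies in the block $W_j$.

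With this in place, I would simply compute inner products. For basis elements $x=\sigma(v_i)\in V_i$ and $y=\tau(v_{i'})\in V_{i'}$ one has $\denot\isoterm(x)=\sum_j a_{ij}\,(U_j\sigma)(\BV(e_j))$ and similarly for $y$; because the blocks $W_j$ are mutually orthogonal only the matching-$j$ terms survive, and because each $U_j$ is unitary these contribute $\scalprod{\sigma}{\tau}$. Hence $\scalprod{\denot\isoterm(x)}{\denot\isoterm(y)} = \bigl(\sum_j \overline{a_{ij}}\,a_{i'j}\bigr)\scalprod{\sigma}{\tau}$, which equals $\scalprod{x}{y}$ exactly because the rows of $(a_{ij})$ are orthonormal: the diagonal case $i=i'$ uses that each row has norm $1$, giving an isometry on each $V_i$, while the off-diagonal case $i\neq i'$ uses orthogonality of distinct rows, giving orthogonality of the images. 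Extending by bounded linearity (legitimate since $\denot\isoterm$ is bounded by Theorem~\ref{th:unitwdef}) yields $\denot\isoterm^{*}\denot\isoterm = \mathrm{id}_{\denot a}$, i.e. $\denot\isoterm$ is an isometry. To upgrade this to unitarity I would invoke Theorem~\ref{th:para-iso-iso}: the inverse iso $\isoterm^{-1}$ is again closed and recursion-free, so $\denot{\isoterm^{-1}}$ is bounded (Theorem~\ref{th:unitwdef}), and $\isoterm(\isoterm^{-1}w)\to^* w$ for every $w\in\base b$ shows $\denot\isoterm\,\denot{\isoterm^{-1}} = \mathrm{id}_{\denot b}$ on basis elements, hence everywhere; a surjective isometry is unitary.

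I expect the main obstacle to be the variable/valuation bookkeeping in the middle step: making precise that substituting a valuation $\sigma$ through a let-chain factors as ``apply a unitary $U_j$ to the $\denot{\Delta_i}$-component, then inject into the correct output block $W_j$'', and that the skeleton of $\BV(e_j)$ (its pattern of $\inl{},\inr{},\pv{-}{-}$) indeed determines $W_j$ independently of the running of the inner isos. This is what lets the orthonormality of the rows of $(a_{ij})$ line up cleanly with fiberwise unitarity; the rest is routine, and the passage from the finitely-supported basis computation to the infinite-dimensional statement is already underwritten by the boundedness established in Theorem~\ref{th:unitwdef}.
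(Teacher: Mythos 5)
Your isometry argument is correct and takes a genuinely, if mildly, different route from the paper's. After the shared first step (eliminating $\lambda f.\isoterm$ and iso-applications to reach a nested clause form, with induction on the nesting of let-bound isos), the paper factors the iso into $\isoterm_{\it straight}$ (the let-chains, which send each input block onto an output block via a composition of unitaries supplied by the induction hypothesis) followed by $\isoterm_{\it rotate}$ (the coefficient matrix $(a_{ij})$ acting on the output blocks $W_j$); unitarity --- surjectivity included --- then comes for free, each factor being an orthogonal sum or composition of unitaries. You instead fuse the two factors into a single inner-product computation on basis vectors, using mutual orthogonality of the blocks $W_j$ together with row orthonormality of $(a_{ij})$, and conclude that $\denot{\isoterm}$ is an isometry. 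This is a clean and correct verification of the isometry half, and the bookkeeping you identify as the main obstacle (that each let-chain factors as a unitary $U_j$ on valuations followed by injection into the block determined by $\BV(e_j)$) is precisely the content of the paper's treatment of $\isoterm_{\it straight}$; what your version buys is a direct formula-level argument, at the cost of losing surjectivity, which the paper gets structurally.

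That lost surjectivity is where you have a genuine gap. Theorem~\ref{th:para-iso-iso} and the inversion $(-)^{-1}$ are defined and proved only for the classical language of Section~3, where every right-hand side is a let-chain ending in a \emph{single} value. For a quantum iso with clauses $v_i\iso\sum_j a_{ij}\cdot e_j$, the paper's inversion does not even produce a syntactically well-formed iso: swapping the two sides of a clause would require matching against a linear combination such as $\frac1{\sqrt2}\tc+\frac1{\sqrt2}\fc$, which is not a pattern. So ``the inverse iso $\isoterm^{-1}$ is again closed and recursion-free'' has no referent in the quantum language, and $\isoterm(\isoterm^{-1}\,w)\to^* w$ is not available; defining $\isoterm^{-1}$ via the conjugate-transpose matrix is natural, but proving it well-typed and proving the round-trip property would essentially re-prove the statement at hand. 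The fix stays inside your own framework: by column orthonormality of $(a_{ij})$ and surjectivity of each $U_j$ (induction hypothesis), every basis element $\rho(\BV(e_k))$ of $\denot{b}$ has the explicit preimage $\sum_i\overline{a_{ik}}\,\sigma(v_i)$ with $\sigma=U_k^{-1}\rho$, so $\denot{\isoterm}$ has dense range; since an isometry between Hilbert spaces has closed range, it is surjective, hence unitary. With that replacement (or by reverting to the paper's straight/rotate factorization for this half), your proof goes through.
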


\begin{proof}[Proof sketch]
The proof of the theorem relies on the fact that to each closed iso
$\entailiso\isoterm : a\iso b$ one can associate an operationally
equivalent iso $\entailiso\isoterm' : a\iso b$ that does not use
iso-variables nor lambda-abstractions.
Such an iso $\isoterm$ is necessarily of the canonical form
\begin{equation}\label{eq:canonical-form}
\left(
  \begin{array}{rcl}
    v_1 &\iso& {\tt let}\,{p_{11}}=\isoterm_{11}\,p'_{11}~{\tt in}\\
        &&  {\tt let}\,{p_{12}}=\isoterm_{12}\,p'_{12}~{\tt in}\\
        &&  \cdots\\
        && {\tt in}~\alpha_{11}w_1 + \cdots + \alpha_{1n}w_n
    \\
    \vdots &\vdots& \vdots
    \\
    v_n &\iso& {\tt let}\,{p_{n1}}=\isoterm_{n1}\,p'_{n1}~{\tt in}\\
        &&  {\tt let}\,{p_{n2}}=\isoterm_{n2}\,p'_{n2}~{\tt in}\\
        &&  \cdots\\
        && {\tt in}~\alpha_{n1}w_1 + \cdots + \alpha_{nn}w_n
  \end{array}
\right)
\end{equation}
where all the $\isoterm_{ij}$ are also of this canonical form. We
define the {\em applicative depth} of such an iso as follows: it is of
depth~0 if its definition does not mention any $\isoterm_{ij}$, and if
it does, its depth is $1$ plus the maximum depth of its
$\isoterm_{ij}$.

  We then prove the theorem by induction on the depth of the iso $\pi$. If it
  is of depth 0, then it is done because of the fact that
  $(\alpha_{ij})_{i,j}$ forms a unitary matrix.

  If it is of depth $n+1$, and if the result is true for all isos of
  depth less or equal to $n$, then consider the factorization of the
  iso of Equation~\eqref{eq:canonical-form} into $\isoterm_{\it
    straight}$ followed by $\isoterm_{\it rotate}$, where
  \[
      \begin{array}{@{}l@{}}
    \isoterm_{\it straight} =\\[1.2ex] 
\left(
  \begin{array}{r@{~\iso~}l}
    v_1 & {\tt let}\,{p_{11}}=\isoterm_{11}\,p'_{11}~{\tt in}~
          {\tt let}\,{p_{12}}=\isoterm_{12}\,p'_{12}~{\tt in}~
          \cdots ~{\tt in}~w_1
    \\
    \cdots  & \cdots
    \\
    v_n & {\tt let}\,{p_{n1}}=\isoterm_{n1}\,p'_{n1}~{\tt in}~
          {\tt let}\,{p_{n2}}=\isoterm_{n2}\,p'_{n2}~{\tt in}~
          \cdots ~{\tt in}~w_n
  \end{array}
\right),
      \end{array}
    \]
    \[
    \begin{array}{@{}l@{}}
    \isoterm_{\it rotate} = \\[1.2ex]
\left(
  \begin{array}{r@{~\iso~}l}
    w_1 & \alpha_{11}w_1 + \cdots + \alpha_{1n}w_n
    \\
    \cdots & \cdots
    \\
    w_n & \alpha_{n1}w_1 + \cdots + \alpha_{nn}w_n
  \end{array}
\right).
      \end{array}
    \]
The iso $\denot{\pi_{\it rotate}}$ is a unitary because the
  $(\alpha_{ij})_{i,j}$ forms a unitary matrix. For the iso
  $\denot{\pi_{\it straight}}$, since the $v_i$'s form an exhaustive
  and non-overlapping coverage of $a$, the space $\denot{a}$ can be
  orthogonally decomposed as
  $\denot{v_i}\oplus\cdots\oplus\denot{v_n}$ where $\denot{v_i}$ is
  the subspace corresponding to the closed values matching
  $v_i$. Similarly, $\denot{b}$ can be orthogonally decomposed into
  $\denot{w_1}\oplus\cdots\oplus\denot{w_n}$. The map
  $\denot{\pi_{\it straight}}$ is then sending each subspace
  $\denot{v_i}$ to the subspace $\denot{w_i}$. The corresponding
  operation is the composition of $\denot{\pi_{i1}}$ with
  $\denot{\pi_{i2}}$ with $\denot{\pi_{i3}}$ with \ldots, each being
  unitaries by induction hypothesis.
  Summing up, $\denot{\pi_{\it straight}}$ is unitary because it is
  the (orthogonal) sum of compositions of unitaries: $\denot{\pi}$ is
  therefore unitary.
\end{proof}

As an illustration, the semantics of {\tt Gate} of
Example~\ref{sec:example-quantum} is given in
Figure~\ref{fig:sem-gate}.

\paragraph{Isos with structural recursion.}
When considering fixpoints, we cannot rely anymore on this finite
compositional construction: the space $\denot{a}$ cannot anymore be
regarded as a {\em finite} sum of subspaces described by each clause.

We therefore need to rely on the formal definition of unitary maps in
general, infinite Hilbert spaces. On top of being bounded linear, a
map $\denot{\isoterm}:\denot{a}\to\denot{b}$ is unitary if (1) it
preserves the scalar product:
$ \scalprod{\denot{\isoterm}(e)}{\denot{\isoterm}(f)} =
\scalprod{e}{f} $ for all $e$ and $f$ in $\denot{a}$ and (2) it is
surjective.

\begin{theorem}\label{th:unitwithrec}
  Given a closed iso $\entailiso\isoterm : a\iso b$ that can use
  structural recursion, the linear map
  $\denot{\pi} : \denot{a} \to \denot{b}$ is unitary.\qed
\end{theorem}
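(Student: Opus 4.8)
The plan is to combine the two facts already in hand—that $\denot{\isoterm}$ is a bounded linear map (Theorem~\ref{th:unitwdef}) and that recursion-free isos are unitary (Theorem~\ref{th:unitnorec})—with the length-grading induced by structural recursion. Recall that a bounded linear map between Hilbert spaces is unitary exactly when it is an isometry (it preserves the scalar product) and it is surjective, so I would establish these two properties in turn. First I would reduce to a single outermost fixpoint by induction on the number of nested fixpoints occurring in $\isoterm$: the base case is precisely Theorem~\ref{th:unitnorec}, and in the inductive step every auxiliary iso appearing in the clauses of the outermost $\mu f.\isoterm_0$ contains strictly fewer nested fixpoints and is therefore unitary by the induction hypothesis. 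It then remains to treat $\denot{\mu f.\isoterm_0}:\denot{a}\to\denot{b}$ with $a$ a structurally recursive type $[c]\tensor b_1\tensor\cdots\tensor b_n$.

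The central device is a \emph{finite unfolding}. Grade $\denot{a}$ by the length of the principal list, writing it as the Hilbert-space direct sum $\bigoplus_{\ell\ge 0}A_\ell$, where $A_\ell$ is the closed span of those basis values whose principal list has length exactly $\ell$. Definition~\ref{def:struct-rec} guarantees that in the cons clause the recursive call $f$ is applied to the strictly shorter tail, while the nil clause does not mention $f$; hence evaluating $\isoterm\,v$ on a basis value $v$ whose principal list has length $\ell$ triggers at most $\ell$ recursive calls. Consequently $\isoterm\,v$ and $U_\ell\,v$ have the same normal form, where $U_\ell$ is the recursion-free iso obtained by unfolding $\mu f.\isoterm_0$ exactly $\ell$ times, the innermost occurrence of $f$ never being reached on inputs of length $\le\ell$. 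Each $U_\ell$ inherits the orthogonality, exhaustivity, and unitarity side-conditions from $\isoterm_0$ and uses only the induction-hypothesis-unitary auxiliary isos, so by Theorem~\ref{th:unitnorec} every $U_\ell$ is unitary; moreover the family is coherent, in that $U_{\ell+1}$ agrees with $U_\ell$ on $A_{\le\ell}$.

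Isometry then follows pointwise on the basis. Given closed values $v,w$ of type $a$, choose $\ell$ at least the maximum of the lengths of their principal lists; then $\scalprod{\denot{\isoterm}(v)}{\denot{\isoterm}(w)} = \scalprod{\denot{U_\ell}(v)}{\denot{U_\ell}(w)} = \scalprod{v}{w}$ because $U_\ell$ is unitary. By sesquilinearity this extends to all finite combinations of basis values, and since $\denot{\isoterm}$ is bounded it extends by continuity to all of $\denot{a}$, so $\denot{\isoterm}$ preserves the scalar product. For surjectivity I would invoke the inversion of Section~\ref{subsec:2st-semantics}: the iso $\isoterm^{-1}$ is closed and well-typed of type $b\iso a$ by Theorem~\ref{th:para-iso-inv-type}, and one checks that inversion sends a structurally recursive iso to a structurally recursive iso, so the whole preceding argument applies verbatim to $\isoterm^{-1}$ and makes $\denot{\isoterm^{-1}}$ a bounded isometry. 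Using the finite unfoldings together with the operational inverse law $\isoterm(\isoterm^{-1}\,w)\to^* w$ (Theorem~\ref{th:para-iso-iso}, which holds on each $U_\ell$) gives $\denot{\isoterm}\circ\denot{\isoterm^{-1}} = \mathrm{id}_{\denot{b}}$ on basis values, hence on all of $\denot{b}$ by continuity; thus $\denot{\isoterm}$ is onto. A surjective isometry is unitary, which closes the argument.

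I expect the main obstacle to be the finite-unfolding step: one must justify that the truncation $U_\ell$ is a genuine well-typed recursion-free iso—in particular that a legitimate base can be supplied at the innermost, never-reached occurrence of $f$, and that the orthogonal-decomposition and unitarity side-conditions are preserved under unfolding—and one must verify carefully that inversion really preserves structural recursion, so that the surjectivity half is symmetric to the isometry half. Everything else is routine bookkeeping around the grading $\bigoplus_\ell A_\ell$ and continuity extensions justified by the boundedness from Theorem~\ref{th:unitwdef}.
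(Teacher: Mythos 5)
Your proposal matches the paper's proof in all essentials: the same length-based grading of $\denot{a}$, the same finite-unfolding device that reduces each bounded-length piece to a recursion-free iso handled by Theorem~\ref{th:unitnorec}, the same continuity/limit argument to extend scalar-product preservation to all of $\denot{a}$, and the same appeal to the syntactic inverse for surjectivity. The obstacles you flag (well-typedness of the truncated unfolding $U_\ell$, and whether inversion preserves structural recursion) are precisely the points the paper itself glosses over, so your write-up is, if anything, slightly more detailed than the published argument.
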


The proof uses the idea highlighted in Sec.~\ref{sec:isos-umaps}: for
a structurally recursive iso of type $[a]\tensor b \iso c$, the
Hilbert space $\denot{[a]\tensor b}$ can be split into a canonical
decomposition $E_0\oplus E_1\oplus E_2\oplus\cdots$, where $E_i$
contains only the values of the form $\pv{[x_1\ldots x_i]}{y}$,
containing the lists of size $i$. On each $E_i$, the iso is equivalent
to an iso without structural recursion.

\begin{proof}
  First note that for any given {\em finite} set $(v_k)_k$ of values of
  type $a$, there exists an unfolding $\isoterm'$ of $\isoterm$ such
  that for all $k$, $ \isoterm' v_k \to\ldots \to W_k $ with $W_k$ a
  normal form of type $b$, and such that there are no rewrite step
  performing a fixpoint unfolding.
  (Indeed, for each $v_k$ there is such an unfolding: there exists a
  maximal unfolding that matches all of them).

  We then ensure that $\denot{\isoterm'}$ acts in a unitary way on a
  subspace of $\denot{a}$, i.e. that the trick of the decomposition we
  used for the finite case in Theorem~\ref{th:unitnorec} can be
  applied here.
  This is the purpose of the condition of structural recursion.
  
  The point is that now for all $n$, the set $\base{a}$ can be
  decomposed as the disjoint union of the values with lists of length
  smaller than $n$ and the ones of length larger than $n$:
  $
    \base{a}^<_n =
     \{ v \tensor [e_1\ldots e_k]  |  v : a_1  \text{~and~}
     k \leq n \text{~and~}  \forall i, e_i : a_2 \}
   $
   and
   $
     \base{a}^>_n =
     \{ v \tensor [e_1\ldots e_k]  |  v : a_1  \text{~and~}
     k > n \text{~and~}  \forall i, e_i : a_2 \}.
   $
   For a fixed $n$, the fixpoint $\mu f.\isoterm$ can be unfolded so
   that if $v \in \base{a}^<_n$ the rewriting of $(\mu f.\isoterm) v$
   to a normal form does not need to contain any unfolding of $f$.

   Therefore, the Hilbert space $\denot{a}$ is the disjoint sum
   $\ell^2(\base{a}^<_n) \oplus \ell^2(\base{a}^>_n)$
   and the action of $\denot{\mu f.\isoterm}$ on this space can be
   decomposed as
   $
     \denot{\mu f.\isoterm}^<_n + \denot{\mu f.\isoterm}^>_n
   $
   where $\denot{\mu f.\isoterm}^<_n$ has a representation as a
   composition of unitaries, invoking Theorem~\ref{th:unitnorec}.

   The corollary is that if $e$ and $f$ are elements of $\denot{a}$
   with support in $\base{a}^<_n$ we effectively have
   $
     \scalprod{\denot{\mu f.\isoterm}(e)}{\denot{\mu f.\isoterm}(f)}
     =
     \scalprod{e}{f}.
   $
   Now, consider $e$ and $f$ two {\em general} elements of $\denot{a}$.

   We claim that we still have the above equality, and we obtain it by
   a limit argument.
   Indeed, construct $e_n$ and $f_n$ the elements built from $e$ and
   $f$ by restricting their support to $\base{a}^<_n$.
   Because of the norm condition on $e$ and $f$, the errors between
   $\scalprod{e}{f}$ and $\scalprod{e_n}{f_n}$ and the error between
   $\scalprod{\denot{\mu f.\isoterm}(e)}{\denot{\mu f.\isoterm}(f)}$
   and
   $\scalprod{\denot{\mu f.\isoterm}(e_n)}{\denot{\mu f.\isoterm}(f_n)}$
   goes to zero as $n$ goes to infinity.

   The last thing to check is that this operator is indeed
   surjective. But then the same argument as for classical fixpoints
   can be used: as long as the syntactic inverse is proved to be
   total, we retrieve surjectivity.
\end{proof}

\section{Conclusion}

In this paper, we proposed a reversible language amenable to quantum
superpositions of values. The language features a weak form of
higher-order that is nonetheless expressible enough to get interesting
maps such as generalized Toffoli operators. We sketched how this language effectively encodes bijections in
the classical case and unitary operations in the quantum case. It
would be interesting to see how this relates to join inverse categories~\cite{catflow,KAARSGAARD201733}.

In the vectorial extension of the language we have the same control as
in the classical, reversible language. Tests are captured by clauses,
and naturally yield quantum tests: this is similar to what can be
found in QML~\cite{qml,qalternation}, yet more general since the QML
approach is restricted to {\tt if-then-else} constructs. The novel
aspect of quantum control that we are able to capture here is a notion
of {\em quantum loops}. These loops were believed to be hard, if not
impossible. What makes it work in our approach is the fact that we are
firmly within a closed quantum system, without measurements. This
makes it possible to only consider unitary maps and frees us from the
L\"ower order on positive matrices~\cite{qalternation}. As we restrict
fixpoints to structural recursion, valid isos are regular enough to
capture unitarity. Ying~\cite{yingbook} also proposes a framework for
quantum while-loops that is similar in spirit to our approach at the
level of denotations: in his approach the control part of the loops is
modeled using an external systems of ``coins'' which, in our case,
correspond to conventional lists. Reducing the manipulation of this
external coin system to iteration on lists allowed us to give a simple
operational semantics for the language.

\end{document}